\newtheorem{fact}{Fact}
\newcommand{\condorcet}[1]{\textcolor{red!75!black}{#1}}
\title{Condorcet-Consistent Choice\\ Among Three Candidates}
\author[1]{Felix Brandt}
\author[1]{Chris Dong}
\author[2]{Dominik Peters}
\affil[1]{Technische Universität München} 
\affil[2]{CNRS, LAMSADE, Universit\'e Paris Dauphine - PSL}
\date{\vspace{-1cm}}
\begin{document}

\maketitle

\begin{abstract}	
A voting rule is a Condorcet extension if it returns a candidate that beats every other candidate in pairwise majority comparisons whenever one exists. Condorcet extensions have faced criticism due to their susceptibility to variable-electorate paradoxes, especially the reinforcement paradox \citep{YoLe78a} and the no-show paradox \citep{Moul88b}. In this paper, we investigate the susceptibility of Condorcet extensions to these paradoxes for the case of exactly three candidates.
For the reinforcement paradox, we establish that it must occur for every Condorcet extension when there are \emph{at least eight} voters and demonstrate that certain refinements of maximin---a voting rule originally proposed by \citet{Cond85a}---are immune to this paradox when there \emph{are at most seven} voters. For the no-show paradox, we prove that the \emph{only} homogeneous Condorcet extensions immune to it are refinements of maximin. We also provide axiomatic characterizations of maximin and two of its refinements, Nanson's rule and leximin, highlighting their suitability for three-candidate elections.
\end{abstract}

\section{Introduction}

Deciding between two candidates based on the preferences of multiple voters is straightforward and allows for simple and natural rules that satisfy virtually all desirable properties, such as majority rule or weighted threshold rules.
However, \emph{choosing from three or more candidates} is what leads to significant challenges and inevitable tradeoffs. Arrow's impossibility and the Gibbard-Satterthwaite theorem are the best known among many results highlighting these difficulties \citep{Arro51a,Gibb73a,Satt75a}. These findings help to explain the 
plethora
of voting rules proposed for elections with three or more candidates. In this paper, we address whether identifying a suitable rule becomes easier when focussing on the case of \emph{exactly three candidates}.

The primary reason why plurality rule---the most common voting rule---struggles with three or more candidates is that it only takes into account the voters' top choices and thereby, for example, ignores what a voter who favors candidate $a$ thinks about candidates $b$ and $c$. Once this information is taken into account, \citet{Cond85a} argued that the best choice is the candidate that the majority prefers, in the sense that every other candidate is judged to be worse by some majority of voters. Nowadays, such a candidate is called a \emph{Condorcet winner}. Choosing a Condorcet winner is motivated by the observation that any other candidate can be overthrown by a coordinated majority of voters who all prefer the same candidate to the selected candidate. However, Condorcet recognized that a Condorcet winner need not exist since the majority relation may be cyclic. This has given rise to extensive efforts designing \emph{Condorcet extensions}, voting rules that select the Condorcet winner whenever it exists and use some other procedure to choose a candidate in the remaining cases. Dozens of Condorcet extensions have been cataloged \citep{Fish77a,Lasl97a,BCE+14a}.
Condorcet also proposed a Condorcet extension of his own, clearly described for the case of three candidates: 

\begin{quote}
``When the three [pairwise majority] views cannot exist together [because of a cycle], the adopted view results from the two [pairwise majority views] that are most probable [i.e., have the largest majority].'' \hfill \mbox{\citep[p.~125]{Cond85a}}
\end{quote}
How this idea should be extended to four or more candidates is only outlined in vague terms and ultimately remains elusive. However, his proposal for three candidates is unambiguous and coincides with what many modern Condorcet extensions do in this case, including maximin, ranked pairs, beat path, split cycle, Kemeny's rule, Dodgson's rule, and Young's rule.

While compelling, Condorcet's principle to always return the Condorcet winner 
has also met some resistance. The two main lines of attack concern settings with a variable set of voters. In particular, 
\citet{YoLe78a} showed that no Condorcet extension satisfies \emph{reinforcement}, a consistency condition demanding that if the same winner is returned for two different electorates, it should also be returned for the union of these electorates.
In a similar spirit, \citet{Moul88b} demonstrated that every Condorcet extension suffers from the \emph{no-show paradox}, where a voter might benefit by abstaining from an election to achieve a more desirable outcome. 
The proofs of \citeauthor{YoLe78a} and \citeauthor{Moul88b} are based on constructing collections of preference profiles on which the paradoxes cannot be avoided. \citeauthor{Moul88b}'s proof uses profiles with at least four candidates and 25 voters. Later, a smaller proof requiring only 12 voters was found, and no proof with fewer voters exists \citep{BGP16c}. \citeauthor{YoLe78a}'s proof for the reinforcement paradox uses at least three candidates and 13 voters.

In this paper, we show that refinements of maximin---the rule proposed by Condorcet---perform particularly well with respect to these paradoxes when there are exactly three candidates. 
Specifically, for up to seven voters, we exhibit a refinement of maximin that is immune to the reinforcement paradox.
During a June 2022 lecture at the Institut Henri Poincaré (``Mathématiques et Démocratie'' conference), Hervé Moulin \href{https://sciencesmaths-paris.fr/images/pdf/MEM2022SLIDES/Herve.pdf\#page=10}{asked} what the smallest number of voters is for which \citeauthor{YoLe78a}'s result holds.
We use SAT solvers to answer this question, showing that it holds even for eight voters. This bound is tight, and thus certain maximin refinements avoid the reinforcement paradox as much as possible. 
Regarding the no-show paradox, it is known that maximin with a fixed tie-breaking order avoids the paradox for three candidates \citep{Moul88b}. We prove that the \emph{only} homogeneous Condorcet extensions immune to the no-show paradox are refinements of maximin. Moreover, by adding continuity, we obtain an axiomatic characterization of maximin. We also characterize two refinements of maximin (Nanson's rule and leximin) using natural monotonicity and invariance axioms.

The rest of the paper is structured as follows.
In \Cref{sec:prelims}, we introduce our model and standard axioms. 
In \Cref{sec:condorcet-extensions}, we provide an overview of Condorcet extensions and their inclusion relationships when only three candidates exist. 
In \Cref{sec:reinforcement}, we present several impossibility results involving reinforcement, each using the minimum possible number of voters. 
In \Cref{sec:participation}, we show that maximin refinements are the only homogeneous Condorcet extensions that circumvent the no-show paradox and provide characterizations of maximin, Nanson's rule, and leximin.

\section{Preliminaries}
\label{sec:prelims}

Let $A = \{a, b, c \}$ be the set of three \emph{candidates} and let $N^*$ be the (possibly infinite) set of potential \emph{voters}.
A non-empty and finite set $N\subseteq N^*$ is called an \emph{electorate}. Given an electorate $N$, a \emph{(preference) profile} $P=(\succ_i)_{i\in N}$ over $N$ is a collection of linear orders, with $\succ_i$ denoting the preference of voter $i \in N$.
For example, if $a \succ_i b \succ_i c$, then voter $i$ prefers $a$ the most, followed by $b$, with $c$ preferred least. We sometimes write $abc$ as a shorthand for this linear order; in \Cref{fig:prelim-example-profile} and in other figures we show linear orders as columns.
We will write $x \succeq_i y$ if $x = y$ or $x \succ_i y$.

Let $P$ be a profile. For two candidates $x, y \in A$, we define the \emph{margin} of $x$ over $y$ as 
\[
m_{x,y}(P) = |\{i \in N : x \succ_i y\}| - |\{i \in N : y \succ_i x\}|,
\]
and we just write $m_{x,y}$ if $P$ is clear from the context.
Thus, if $m_{x,y} > 0$, then a strict majority of voters prefer $x$ to $y$. Note that $m_{x,y} = -m_{y,x}$ and that $m_{x,x} = 0$. We can display this information in a \emph{margin graph}, which is a weighted digraph on vertex set $A$ where for each pair $\{x, y\}$ of vertices, we orient the edge as $x \to y$ in such a way that $m_{x, y} > 0$ (so that $x$ beats $y$ in a pairwise comparison) and give it weight $m_{x,y}$. For pairs with $m_{x, y} = 0$, we do not draw an edge. See \Cref{fig:prelim-example-margin-graph} for an example of a margin graph. For our purposes, it is often not necessary to know the precise weight of an edge, and in those cases, we can draw the \emph{ordinal margin graph}, where we preserve the direction of each edge and show which edges have strictly higher weights than others (by drawing the higher-weight edge with more thickness) and which edges have equal weight (by drawing the edges with the same thickness). Sometimes we use an empty arrowhead for the thinnest edges. See \Cref{fig:prelim-example-ordinal-margin-graph} for an example of an ordinal margin graph. 

For any profile $P$ over $N$, the margins $(m_{x,y})_{x,y \in A}$ all have the same parity (which is the same as the parity of $|N|$). The well-known \emph{McGarvey theorem} states that for every margin graph with same-parity weights, there is a profile $P$ that induces it \citep{McGa53a,Debo87a}.

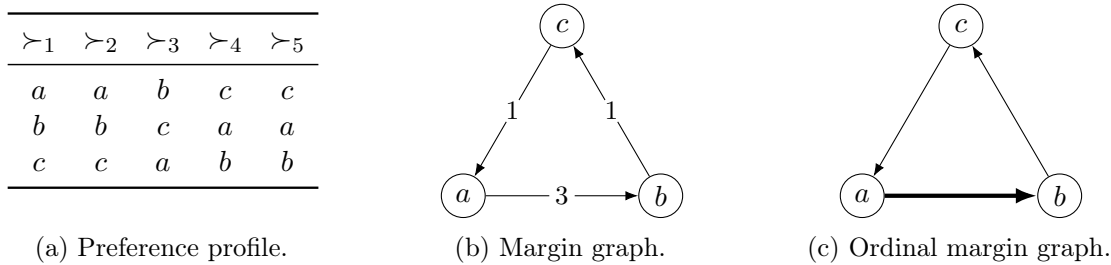
\begin{figure}[t]
	\centering
	\begin{subfigure}{0.28\linewidth}
		\centering
		\[
		\begin{array}{ccccc}
			\toprule
			{\succ_1} & {\succ_2} & {\succ_3} & {\succ_4} & {\succ_5}\\
			\midrule
			a & a & b & c & c \\
			b & b & c & a & a \\
			c & c & a & b & b \\
			\bottomrule
		\end{array}
		\]
		\caption{Preference profile.}
		\label{fig:prelim-example-profile}
	\end{subfigure}\qquad
	\begin{subfigure}{0.28\linewidth}
		\centering
		\begin{tikzpicture}[baseline=-1ex]
			\tikzstyle{mynode}=[fill=white,circle,draw,minimum size=1.5em,inner sep=2pt]
			\tikzstyle{mylabel}=[fill=white,circle,inner sep=0.5pt]
			\node[mynode] (a) at (210:1.5) {$a$};
			\node[mynode] (c) at (90:1.5) {$c$};
			\node[mynode] (b) at (330: 1.5) {$b$};
			
			\draw[-Latex] (a) edge node[mylabel] {\small$3$} (b);
			\draw[-Latex] (b) edge node[mylabel] {\small$1$} (c);
			\draw[-Latex] (c) edge node[mylabel] {\small$1$} (a);
		\end{tikzpicture}
		\caption{Margin graph.}
		\label{fig:prelim-example-margin-graph}
	\end{subfigure}\qquad
	\begin{subfigure}{0.28\linewidth}
		\centering
		\begin{tikzpicture}[baseline=-1ex]
			\tikzstyle{mynode}=[fill=white,circle,draw,minimum size=1.5em,inner sep=0pt]
			\tikzstyle{mylabel}=[fill=white,circle,inner sep=0.5pt]
			\node[mynode] (a) at (210:1.5) {$a$};
			\node[mynode] (c) at (90:1.5) {$c$};
			\node[mynode] (b) at (330: 1.5) {$b$};
			
			\draw[-{Latex[length=8pt]}] (a) edge [ultra thick] (b);
			\draw[-Latex] (b) edge (c);
			\draw[-Latex] (c) edge (a);
		\end{tikzpicture}
		\caption{Ordinal margin graph.}
		\label{fig:prelim-example-ordinal-margin-graph}
	\end{subfigure}
	\caption{An example of a preference profile over $N = \{1, \dots, 5\}$ and its margin graph and ordinal margin graph. The ordinal margin graph encodes the information that $m_{a,b} > m_{b,c} = m_{c, a} > 0$.}
	\label{fig:prelim-example}
\end{figure}

A \emph{social choice function} $f$ maps each profile $P$ (on any electorate $N\subseteq N^*$) to a non-empty set $f(P) \subseteq A$ of winners.
It is \emph{resolute} if $\lvert f(P)\rvert=1$ for all profiles $P$. 
A social choice function $f_1$ \emph{refines} another social choice function $f_2$ if $f_1(P) \subseteq f_2(P)$ for all profiles $P$.

Let us define four basic properties of a social choice function. The first two are standard symmetry conditions.
\begin{itemize}
	\item A social choice function $f$ is \emph{anonymous} if its output does not depend on the identity of the voters. Formally, for all profiles $P$ over $N$ and all injections $\pi\colon  N \to N^*$, we have $f(P) = f(\pi(P))$ where $\pi(P)$ is the profile over $\pi(N)$ where voter $\pi(i)$ has preferences $\succ_{i}$.
	\item A social choice function $f$ is \emph{neutral} if its output does not depend on the identities of the candidates. Formally, for all profiles $P$ and all bijections $\sigma\colon A \to A$, we have $\sigma(f(P)) = f(\sigma(P))$, where $\sigma(P)$ refers to the profile obtained from $P$ by replacing each voter $i$'s preference $x \succ_i y \succ_i z$ with $\sigma(x) \succ_i \sigma(y) \succ_i \sigma(z)$. 
\end{itemize}
The next two properties involve the operation of copying a profile, which is always possible, provided that the set of potential voters is infinite. Thus, we will only invoke these axioms when $N^* = \mathbb N$.
Let $P$ be a profile over $N$ with $|N| = n$. We say that a profile $Q$ over electorate $N'$ with $|N'| = t \cdot n$ is a \emph{$t$-fold copy} of $P$ if we can partition $N'$ into $t$ disjoint subelectorates $N_1, \dots, N_t$, each of size $n$, such that the restricted profile $Q_{N_j}$ is a copy of $P$ for each $j = 1, \dots, t$, in the sense that there is a bijection $\phi : N \to N_j$ such that each voter $i \in N$ has the same preferences as $\phi(i)$. We usually write ``$tP$'' to denote a profile that is a $t$-fold copy of $P$.
\begin{itemize}
	\item A social choice function $f$ is \emph{homogeneous} if doubling the profile does not affect the output. Formally, $f$ is homogeneous if for all profiles $P$ and $2P$ such that $2P$ is a 2-fold copy of $P$, we have $f(P)= f(2P)$.
	\item A social choice function $f$ is \emph{continuous} if for all  profiles $P$ and $P'$, there is some $n'\in \mathbb N$ such that for all $n\ge n'$ it holds that $f(nP + P')\subseteq f(P)$ whenever $nP$ is an $n$-fold copy of $P$ defined on an electorate disjoint from $P'$.
\end{itemize}
\citet{Youn75a} introduced the continuity axiom for anonymous social choice functions, with the intuition that overwhelmingly large groups of voters should decide the outcome, while negligible, arbitrarily small groups should only be able to break ties. It is also known as the ``overwhelming majorities axiom'' \citep{Myer95b}.

\section{Condorcet Extensions}
\label{sec:condorcet-extensions}
We say that a candidate $x \in A$ is a \emph{Condorcet winner} if $m_{x,y} > 0$ for all $y \in A \setminus \{x\}$, which means that $x$ beats every other candidate in a pairwise majority comparison. A Condorcet winner may fail to exist (see \Cref{fig:prelim-example}), but if it exists, it has to be unique.
A social choice function $f$ is a \emph{Condorcet extension} or \emph{Condorcet-consistent} if $f(P) = \{x\}$ whenever $x$ is a Condorcet winner.

Numerous Condorcet extensions have been defined, though for the case of three candidates, many of these rules coincide. Of particular interest is the \emph{maximin rule}, which selects the candidates whose worst margin is highest:
\[
f_{\text{maximin}}(P) = \arg\max_{x \in A} \min_{y \in A \setminus \{x\}} m_{x, y}.
\]
Thus, the winners under this rule never lose too badly against another candidate. It is sometimes also known as the \emph{minimax} rule (since it minimizes the worst loss) or the \emph{Simpson-Kramer} rule. 

To define some other social choice functions, we need the concept of the \emph{Borda score} of a candidate, defined as $\beta_x = \sum_{y \in A} m_{x,y}$. This is the net out-degree of $x$ in the margin graph. The \emph{Borda rule} selects the candidates with the highest Borda score. Note that a Condorcet winner $x$, if one exists, has a strictly positive Borda score (because $m_{x,y} > 0$ for all $y \neq x$). In addition, because $m_{x,y} = -m_{y,x}$, we have $\sum_{x \in A} \beta_x = 0$, so the average Borda score of a candidate is 0, and thus a Condorcet winner always has an above-average Borda score. However, it need not have the \emph{highest} Borda score. Still, Borda scores can be used to define some interesting Condorcet extensions:

\begin{itemize}
	\item The \citet{Blac48a} rule is the rule that selects the Condorcet winner if one exists and otherwise returns the candidates with the highest Borda score.
	\item The \citet{Nans82a} rule repeatedly deletes all candidates whose Borda score is not positive. Borda scores are computed with respect to the remaining candidates $A'\subseteq A$, and the process continues until there is no candidate $x$ with positive Borda score $\sum_{y \in A'} m_{x,y}$.
	\item The \emph{leximin rule} selects candidates whose worst margin is highest, just like maximin. However, when there are several such candidates, it breaks the tie in favor of those with the higher second-worst margin, breaking any remaining ties using the third-worst margin, and so on.%
	\footnote{\label{fn:leximin definition}Formally, for $x \in A$, let $m^x = (m^x_{1},\dots, m^x_{m})$ be a reordered vector of the majority margins $(m_{x,y})_{y\in A}$ with $m^x_{1}\le \dots\le m^x_{m}$. For $x,y\in A$ we write $m^x \triangleright m^y$ if and only if there is some $r\le m$ such that $m^x_{i} = m^y_{i}$ for all $i< r$ and $m^x_{r}>m^y_{r}$. For a given profile, leximin returns the set of all $x$ such that there is no $y\in A$ with $m^y \triangleright m^x$. We call $\triangleright$ the \emph{lexicographic ordering}. It forms a linear order over $\mathbb Z^m$, so for leximin to return multiple candidates, these must have the same ordered vector of majority margins.}
	This is a natural way to refine maximin, though it does not appear to have been studied in the academic literature.%
	\footnote{There have been discussions of this rule on the \texttt{election-methods} mailing list (\href{http://lists.electorama.com/pipermail/election-methods-electorama.com//2010-June/091958.html}{2010}, \href{http://lists.electorama.com/pipermail/election-methods-electorama.com/2011-November/094364.html}{2011}), and a preprint finds that leximin is more frequently resolute than other maximin refinements in simulations \citep{Darl16a}.}
	In the case of three candidates, the leximin rule is equivalent to the maximin rule with ties broken in favor of candidates with higher Borda score, a rule that was also discussed by \citet[Appendix A]{HoPa23a}. To see this, note that if two candidates $x$ and $y$ have the same worst margin, then $x$ has a higher second-worst margin than $y$ if and only if the sum of these two margins is higher for $x$ than for $y$.
\end{itemize}

\begin{figure}[t]
	\centering
	\includegraphics[width=11cm]{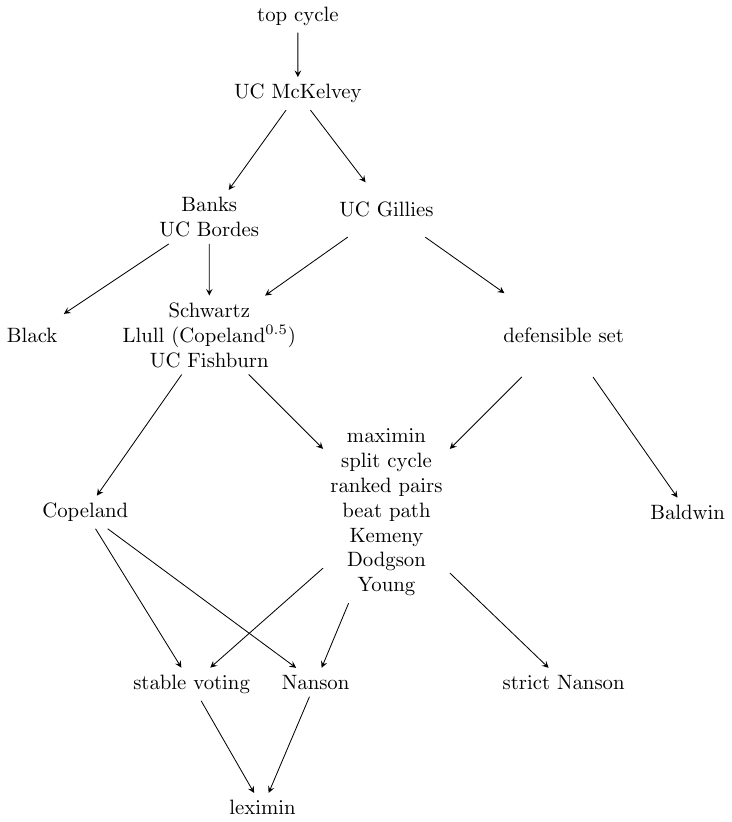}
	\caption{Hasse diagram of three-candidate Condorcet extensions where lower rules refine higher ones, and rules at the same node are identical.}
	\label{fig:hasse}
\end{figure}

There are many other commonly studied Condorcet extensions. For brevity, we will not include definitions of all the rules we mention below since they can be found elsewhere (\citealp{Dugg11a}, for variants of the uncovered set (UC); \citealp{HoPa21b}, for top cycle, Llull, Copeland, Baldwin, strict Nanson, ranked pairs, beat path, split cycle; \citealp{CHH15a}, for Dodgson and Young; \citealp{HoPa22a}, for stable voting; \citealp{Holl24a} or \Cref{sec:participation} below, for the defensible set).

We will now study how all these Condorcet extensions relate to each other in the case of three candidates. In particular, we will be interested in which rules are equivalent (i.e., select the same output for all profiles) and which rules refine which other rules.
\Cref{fig:hasse} shows a Hasse diagram of the refinement relation of these Condorcet extensions. Rules that appear in the same node of the diagram are equivalent. Note, in particular, the large cluster of equivalent Condorcet extensions in the middle: maximin is equivalent to split cycle, ranked pairs, beat path, as well as the rules of Kemeny, Dodgson, and Young. 

\begin{table}
	\makebox[\linewidth][c]{
		\scalebox{0.7}{
			\tikzset{
    ordinal margin graph vertex/.style={circle,draw,font=\large,minimum size=16pt,inner sep=0pt},
    small margin/.style={line width=0.4pt, -{Latex[open,length=5pt]}},
    medium margin/.style={line width=1.13pt, -{Latex[length=5pt]}},
    large margin/.style={line width=2pt, -{Latex[length=6pt]}}
}
\begin{tabular}{lcccccccccccc}
\toprule
& Graph A & Graph B & Graph C & Graph D & Graph E & Graph F & Graph G & Graph H & Graph I & Graph J & Graph K & Graph L \\[2pt]
& \href{https://voting.ml/?profile=1ABC-1BCA-1CAB}{
\begin{tikzpicture}[scale=0.8, transform shape]
    \node[ordinal margin graph vertex] (v0) at (0, 0) {$a$};
    \node[ordinal margin graph vertex] (v1) at (1.5, 0) {$b$};
    \node[ordinal margin graph vertex] (v2) at (0.75, 1.3) {$c$};
    \draw[small margin] (v0) -- (v1);
    \draw[small margin] (v1) -- (v2);
    \draw[small margin] (v2) -- (v0);
\end{tikzpicture}}
& \href{https://voting.ml/?profile=1ABC-1ACB-1BAC-1BCA-1CAB-1CBA}{
\begin{tikzpicture}[scale=0.8, transform shape]
    \node[ordinal margin graph vertex] (v0) at (0, 0) {$a$};
    \node[ordinal margin graph vertex] (v1) at (1.5, 0) {$b$};
    \node[ordinal margin graph vertex] (v2) at (0.75, 1.3) {$c$};
\end{tikzpicture}}
& \href{https://voting.ml/?profile=1ABC-1ABC-1BCA-1CAB-1CAB}{
\begin{tikzpicture}[scale=0.8, transform shape]
    \node[ordinal margin graph vertex] (v0) at (0, 0) {$a$};
    \node[ordinal margin graph vertex] (v1) at (1.5, 0) {$b$};
    \node[ordinal margin graph vertex] (v2) at (0.75, 1.3) {$c$};
    \draw[medium margin] (v0) -- (v1);
    \draw[small margin] (v1) -- (v2);
    \draw[small margin] (v2) -- (v0);
\end{tikzpicture}}
& \href{https://voting.ml/?profile=1ABC-1CAB}{
\begin{tikzpicture}[scale=0.8, transform shape]
    \node[ordinal margin graph vertex] (v0) at (0, 0) {$a$};
    \node[ordinal margin graph vertex] (v1) at (1.5, 0) {$b$};
    \node[ordinal margin graph vertex] (v2) at (0.75, 1.3) {$c$};
    \draw[small margin] (v0) -- (v1);
\end{tikzpicture}}
& \href{https://voting.ml/?profile=1ACB-1CAB}{
\begin{tikzpicture}[scale=0.8, transform shape]
    \node[ordinal margin graph vertex] (v0) at (0, 0) {$a$};
    \node[ordinal margin graph vertex] (v1) at (1.5, 0) {$b$};
    \node[ordinal margin graph vertex] (v2) at (0.75, 1.3) {$c$};
    \draw[small margin] (v0) -- (v1);
    \draw[small margin] (v2) -- (v1);
\end{tikzpicture}}
& \href{https://voting.ml/?profile=1ABC-1ACB-1CAB-1CAB}{
\begin{tikzpicture}[scale=0.8, transform shape]
    \node[ordinal margin graph vertex] (v0) at (0, 0) {$a$};
    \node[ordinal margin graph vertex] (v1) at (1.5, 0) {$b$};
    \node[ordinal margin graph vertex] (v2) at (0.75, 1.3) {$c$};
    \draw[medium margin] (v0) -- (v1);
    \draw[small margin] (v2) -- (v1);
\end{tikzpicture}}
& \href{https://voting.ml/?profile=1ABC-1ABC-1ABC-1ABC-1BCA-1BCA-1CAB-1CAB-1CAB}{
\begin{tikzpicture}[scale=0.8, transform shape]
    \node[ordinal margin graph vertex] (v0) at (0, 0) {$a$};
    \node[ordinal margin graph vertex] (v1) at (1.5, 0) {$b$};
    \node[ordinal margin graph vertex] (v2) at (0.75, 1.3) {$c$};
    \draw[large margin] (v0) -- (v1);
    \draw[medium margin] (v1) -- (v2);
    \draw[small margin] (v2) -- (v0);
\end{tikzpicture}}
& \href{https://voting.ml/?profile=1ABC-1ABC-1ABC-1BCA-1CAB-1CAB}{
\begin{tikzpicture}[scale=0.8, transform shape]
    \node[ordinal margin graph vertex] (v0) at (0, 0) {$a$};
    \node[ordinal margin graph vertex] (v1) at (1.5, 0) {$b$};
    \node[ordinal margin graph vertex] (v2) at (0.75, 1.3) {$c$};
    \draw[medium margin] (v0) -- (v1);
    \draw[small margin] (v1) -- (v2);
\end{tikzpicture}}
& \href{https://voting.ml/?profile=1ABC-1ABC-1ABC-1BCA-1BCA-1CAB-1CAB}{
\begin{tikzpicture}[scale=0.8, transform shape]
    \node[ordinal margin graph vertex] (v0) at (0, 0) {$a$};
    \node[ordinal margin graph vertex] (v1) at (1.5, 0) {$b$};
    \node[ordinal margin graph vertex] (v2) at (0.75, 1.3) {$c$};
    \draw[medium margin] (v0) -- (v1);
    \draw[medium margin] (v1) -- (v2);
    \draw[small margin] (v2) -- (v0);
\end{tikzpicture}}
& \href{https://voting.ml/?profile=1ABC-1ABC-1BCA-1CAB}{
\begin{tikzpicture}[scale=0.8, transform shape]
    \node[ordinal margin graph vertex] (v0) at (0, 0) {$a$};
    \node[ordinal margin graph vertex] (v1) at (1.5, 0) {$b$};
    \node[ordinal margin graph vertex] (v2) at (0.75, 1.3) {$c$};
    \draw[small margin] (v0) -- (v1);
    \draw[small margin] (v1) -- (v2);
\end{tikzpicture}}
& \href{https://voting.ml/?profile=1ABC-1ABC-1ABC-1BCA-1BCA-1CAB-1CAB-1CAB-1CAB}{
\begin{tikzpicture}[scale=0.8, transform shape]
    \node[ordinal margin graph vertex] (v0) at (0, 0) {$a$};
    \node[ordinal margin graph vertex] (v1) at (1.5, 0) {$b$};
    \node[ordinal margin graph vertex] (v2) at (0.75, 1.3) {$c$};
    \draw[large margin] (v1) -- (v2);
    \draw[medium margin] (v0) -- (v1);
    \draw[small margin] (v2) -- (v0);
\end{tikzpicture}}
& \href{https://voting.ml/?profile=1ABC-1ABC-1BCA-1CAB-1CAB-1CAB}{
\begin{tikzpicture}[scale=0.8, transform shape]
    \node[ordinal margin graph vertex] (v0) at (0, 0) {$a$};
    \node[ordinal margin graph vertex] (v1) at (1.5, 0) {$b$};
    \node[ordinal margin graph vertex] (v2) at (0.75, 1.3) {$c$};
    \draw[medium margin] (v1) -- (v2);
    \draw[small margin] (v0) -- (v1);
\end{tikzpicture}}
\\
& $\begin{aligned} &\: m_{a,b} \\ = &\: m_{b,c} \\ = &\: m_{c,a} \\ > &\: 0 \end{aligned}$
& $\begin{aligned} &\: m_{a,b} \\ = &\: m_{b,c} \\ = &\: m_{c,a} \\ = &\: 0 \end{aligned}$
& $\begin{aligned} &\: m_{a,b} \\ > &\: m_{b,c} \\ = &\: m_{c,a} \\ > &\: 0 \end{aligned}$ 
& $\begin{aligned} &\: m_{a,b} \\ > &\: m_{b,c} \\ = &\: m_{c,a} \\ = &\: 0 \end{aligned}$ 
& $\begin{aligned} &\: m_{a,b} \\ = &\: m_{c,b} \\ > &\: m_{c,a} \\ = &\: 0 \end{aligned}$
& $\begin{aligned} &\: m_{a,b} \\ > &\: m_{c,b} \\ > &\: m_{c,a} \\ = &\: 0 \end{aligned}$ 
& $\begin{aligned} &\: m_{a,b} \\ > &\: m_{b,c} \\ > &\: m_{c,a} \\ > &\: 0 \end{aligned}$ 
& $\begin{aligned} &\: m_{a,b} \\ > &\: m_{b,c} \\ > &\: m_{c,a} \\ = &\: 0 \end{aligned}$ 
& $\begin{aligned} &\: m_{a,b} \\ = &\: m_{b,c} \\ > &\: m_{c,a} \\ > &\: 0 \end{aligned}$ 
& $\begin{aligned} &\: m_{a,b} \\ = &\: m_{b,c} \\ > &\: m_{c,a} \\ = &\: 0 \end{aligned}$ 
& $\begin{aligned} &\: m_{b,c} \\ > &\: m_{a,b} \\ > &\: m_{c,a} \\ > &\: 0 \end{aligned}$ 
& $\begin{aligned} &\: m_{b,c} \\ > &\: m_{a,b} \\ > &\: m_{c,a} \\ = &\: 0 \end{aligned}$ 
\\
\midrule
\makecell{top cycle}
&
$\{a,b,c\}$&$\{a,b,c\}$&$\{a,b,c\}$&$\{a,b,c\}$&$\{a,c\}$&$\{a,c\}$&$\{a,b,c\}$&$\{a,b,c\}$&$\{a,b,c\}$&$\{a,b,c\}$&$\{a,b,c\}$&$\{a,b,c\}$
\\
\midrule
\makecell{UC McKelvey}
&
$\{a,b,c\}$&$\{a,b,c\}$&$\{a,b,c\}$&$\{a,c\}$&$\{a,c\}$&$\{a,c\}$&$\{a,b,c\}$&$\{a,b,c\}$&$\{a,b,c\}$&$\{a,b,c\}$&$\{a,b,c\}$&$\{a,b,c\}$
\\
\midrule
\makecell{UC Bordes\\Banks}
&
$\{a,b,c\}$&$\{a,b,c\}$&$\{a,b,c\}$&$\{a,c\}$&$\{a,c\}$&$\{a,c\}$&$\{a,b,c\}$&$\{a,b\}$&$\{a,b,c\}$&$\{a,b\}$&$\{a,b,c\}$&$\{a,b\}$
\\
\midrule
\makecell{UC Gillies}
&
$\{a,b,c\}$&$\{a,b,c\}$&$\{a,b,c\}$&$\{a,c\}$&$\{a,c\}$&$\{a,c\}$&$\{a,b,c\}$&$\{a,c\}$&$\{a,b,c\}$&$\{a,c\}$&$\{a,b,c\}$&$\{a,c\}$
\\
\midrule
\makecell{defensible set}
&
$\{a,b,c\}$&$\{a,b,c\}$&$\{a,c\}$&$\{a,c\}$&$\{a,c\}$&$\{a,c\}$&$\{a,c\}$&$\{a,c\}$&$\{a,c\}$&$\{a,c\}$&$\{a\}$&$\{a\}$
\\
\midrule
\makecell{Llull\\Schwartz\\UC Fishburn}
&
$\{a,b,c\}$&$\{a,b,c\}$&$\{a,b,c\}$&$\{a,c\}$&$\{a,c\}$&$\{a,c\}$&$\{a,b,c\}$&$\{a\}$&$\{a,b,c\}$&$\{a\}$&$\{a,b,c\}$&$\{a\}$
\\
\midrule
\makecell{Copeland}
&
$\{a,b,c\}$&$\{a,b,c\}$&$\{a,b,c\}$&$\{a\}$&$\{a,c\}$&$\{a,c\}$&$\{a,b,c\}$&$\{a\}$&$\{a,b,c\}$&$\{a\}$&$\{a,b,c\}$&$\{a\}$
\\
\midrule
\makecell{maximin\\ranked pairs\\beat path\\split cycle}
&
$\{a,b,c\}$&$\{a,b,c\}$&$\{a,c\}$&$\{a,c\}$&$\{a,c\}$&$\{a,c\}$&$\{a\}$&$\{a\}$&$\{a\}$&$\{a\}$&$\{a\}$&$\{a\}$
\\
\midrule
\makecell{strict Nanson}
&
$\{a,b,c\}$&$\{a,b,c\}$&$\{c\}$&$\{a,c\}$&$\{a,c\}$&$\{a,c\}$&$\{a\}$&$\{a\}$&$\{a\}$&$\{a\}$&$\{a\}$&$\{a\}$
\\
\midrule
\makecell{stable voting}
&
$\{a,b,c\}$&$\{a,b,c\}$&$\{a,c\}$&$\{a\}$&$\{a,c\}$&$\{a\}$&$\{a\}$&$\{a\}$&$\{a\}$&$\{a\}$&$\{a\}$&$\{a\}$
\\
\midrule
\makecell{Nanson}
&
$\{a,b,c\}$&$\{a,b,c\}$&$\{a\}$&$\{a\}$&$\{a,c\}$&$\{a,c\}$&$\{a\}$&$\{a\}$&$\{a\}$&$\{a\}$&$\{a\}$&$\{a\}$
\\
\midrule
\makecell{leximin}
&
$\{a,b,c\}$&$\{a,b,c\}$&$\{a\}$&$\{a\}$&$\{a,c\}$&$\{a\}$&$\{a\}$&$\{a\}$&$\{a\}$&$\{a\}$&$\{a\}$&$\{a\}$
\\
\bottomrule
\end{tabular}
	}}
	\caption{The outputs of rules that only depend on the ordinal margin graph.}
	\label{tbl:rules-ordinal}
\end{table}

\begin{theorem}
	\label{thm:hasse}
	For three candidates, \Cref{fig:hasse} shows the Hasse diagram of the refinement relation between select Condorcet extensions.
\end{theorem}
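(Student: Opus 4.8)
The plan is to reduce the whole statement to a finite computation. The first observation is that every rule appearing in \Cref{fig:hasse} is neutral, so its behaviour is completely determined once we know its output on one representative of each equivalence class of ordinal margin graphs under relabelling of the candidates. I would make this precise with a lemma asserting that each rule listed in \Cref{tbl:rules-ordinal} depends only on the ordinal margin graph. For most of them (top cycle, the uncovered-set variants, Copeland, Schwartz/Llull, ranked pairs, beat path, split cycle, the defensible set, stable voting, and maximin) this is immediate from the definitions, since with three candidates every quantity they use is a comparison of margins or a feature of the tournament. For the two Borda-based survivors, Nanson and leximin, I would invoke the facts already recorded in \Cref{sec:condorcet-extensions}: the sign of a three-candidate Borda score $\beta_x = m_{x,y}+m_{x,z}$ is fixed by the ordinal comparison of the two incident margins, and (by the footnote on leximin) the second-worst-margin tie-break reduces to comparing sums of two margins, hence is ordinal as well.

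Next I would enumerate the ordinal margin graphs up to neutrality by splitting on the number of tied pairs. This gives the all-ties graph (B); the single-edge graph (D); the one-tie graphs, which are either \emph{convergent} (two candidates tie and both beat the third, giving E and F according to whether the two winning margins are equal) or a \emph{path} in which one candidate beats a second who beats a third while the first and third tie (giving H, J, L according to which of the two edges is heavier); and the zero-tie graphs, which either possess a Condorcet winner or form a three-cycle. The cyclic graphs split into the fully symmetric case (A), the ``two equal'' cases distinguished by whether the odd margin is largest or smallest (C and I), and the ``all distinct'' cases, of which there are exactly two up to the rotational symmetry of a directed triangle (G and K). This yields precisely the twelve classes of \Cref{tbl:rules-ordinal}, together with the remaining case in which a Condorcet winner exists and every Condorcet extension returns the singleton winner. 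I would then verify each column of \Cref{tbl:rules-ordinal} directly from the definitions; this is the routine but lengthy bookkeeping step.

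With the table in hand, the diagram can be read off mechanically: two rules share a node iff their rows agree on all twelve graphs (and hence, together with the trivial case, on all profiles), and one rule refines another iff its output is a subset of the other's in every column. To certify that \Cref{fig:hasse} shows the \emph{full} refinement order and not merely a part of it, I would, for each pair of nodes that the diagram leaves incomparable, exhibit one graph on which each rule selects a candidate the other excludes, and for each drawn edge I would check that it is a covering relation, i.e.\ that no listed node lies strictly between its endpoints.

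The main obstacle is the handful of rules in \Cref{fig:hasse} that are \emph{not} determined by the ordinal margin graph and therefore escape the table. Kemeny's, Dodgson's, and Young's rules are defined through cardinal optimization problems, so placing them in the maximin node requires a separate, rule-specific argument showing that for three candidates each of them always selects exactly the maximin winners. Likewise, any rule whose output depends on the \emph{magnitudes} rather than merely the \emph{order} of the margins — for instance Black's rule, which can return $\{a\}$, $\{b\}$, or $\{a,b\}$ on profiles all sharing the single ordinal type K — cannot be read off \Cref{tbl:rules-ordinal} and must be pinned down by reasoning about the actual margins. Establishing these cardinal equivalences, and confirming that they are consistent with the order extracted from the table, is where the real work lies.
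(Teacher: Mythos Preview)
Your plan is essentially the paper's own argument: establish ordinal-margin-graph invariance for the bulk of the rules, tabulate outputs on the twelve non-isomorphic types, read off the refinement order, and then treat the non-ordinal rules and Kemeny/Dodgson/Young by separate arguments (the paper simply cites \citet{CMM14a} and \citet{Heil20a} for the latter three rather than reproving equivalence with maximin).

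There is, however, a concrete omission. You do not mention Baldwin's rule, which is a node of \Cref{fig:hasse} but does \emph{not} appear in \Cref{tbl:rules-ordinal} precisely because it is not ordinal-margin-graph invariant even with three candidates. On a profile of type~G ($m_{a,b}>m_{b,c}>m_{c,a}>0$), both $b$ and $c$ have negative Borda score, but which of them is the Borda loser depends on the sign of $2m_{b,c}-m_{a,b}-m_{c,a}$, which the ordinal type does not determine; Baldwin then outputs $\{a\}$ or $\{c\}$ accordingly. The paper places Baldwin in the diagram by going through all twelve ordinal types and checking that \emph{every possible} Baldwin output on profiles of that type lies inside the defensible set, and then exhibits explicit profiles (e.g.\ $4acb+5bac+3cab+5cba$ and $abc+3bca+4cab$) to show Baldwin is neither refined by the leaves nor a refinement of Banks. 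Your catch-all about ``rules whose output depends on magnitudes'' singles out Black but not Baldwin, and the Baldwin argument is qualitatively different: it is a set-valued per-type analysis rather than pinning the rule to a single cell.

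A smaller point: your list of ``Borda-based survivors'' in the table names Nanson and leximin but omits strict Nanson, which is also in \Cref{tbl:rules-ordinal} and likewise needs the observation that the sign of a three-candidate Borda score is ordinally determined.
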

\begin{proof}
	Except for Black's and Baldwin's rules (which we will consider separately), all of the social choice functions included in the diagram depend only on the ordinal margin graph. This means that if two profiles induce the same ordinal margin graph, then the output is the same for the two profiles \citep{Holl24a}. For most rules, this can be seen straightforwardly from their definitions. In fact, many of the Condorcet extensions in the diagram only depend on the orientation of the edges in the margin graph and not their relative weights. For Kemeny's, Dodgson's, and Young's rules, this follows because they are equivalent to maximin for three candidates (\citealp{CMM14a}, working paper version, Theorem 1; see also \citealp{Heil20a}, Theorem 3.2). For the Nanson and strict Nanson rules, this follows because, for three candidates, the ordinal margin graph contains enough information to determine whether a candidate has below-average Borda score.%
	\footnote{Nanson's rule eliminates all candidates whose Borda score is at most 0 (or strictly below 0 for the strict Nanson rule). For three candidates, a candidate's Borda score is at most 0 if and only if it either does not have outgoing edges or it has one outgoing edge that is lighter than its incoming edge. It is strictly below 0 if it either has no outgoing edges and an incoming edge or it has one outgoing edge that is strictly lighter than its incoming edge.}
	Note that all these rules are Condorcet extensions, and they are neutral. Therefore, these rules can be fully specified by listing their outputs on the 12 possible non-isomorphic ordinal margin graphs in which there is no Condorcet winner. We have produced such a list in \Cref{tbl:rules-ordinal}, which immediately establishes most of the relations displayed in \Cref{fig:hasse}.
	
	For Black's rule, note that it will never select a candidate that only has incoming edges but no outgoing edges (since such a candidate has negative Borda score). Inspecting \Cref{tbl:rules-ordinal}, we see that, when there is no Condorcet winner, Banks' rule returns all candidates for which this is not the case (among others); hence Black's rule refines Banks (which in turn refines UC McKelvey and the top cycle). On the profile \href{https://voting.ml/?profile=3abc-1bca-4cab}{$3abc + bca + 4cab$}, Black's rule selects $\{a\}$ while UC Gillies selects $\{b,c\}$; hence Black's rule is not refined by and does not refine any other rule in the diagram (which are all refinements of UC Gillies).
	
	For Baldwin's rule, we can go through the 12 ordinal margin graphs in \Cref{tbl:rules-ordinal} to determine what it may output on each of them and thereby deduce that it refines the defensible set (and consequentially also UC Gillies, UC McKelvey, and the top cycle).%
	\footnote{For Graphs A and B, the defensible set contains all candidates, so Baldwin refines it. For Graphs C, D, E, and F, $b$ is the Borda loser, and so is eliminated first, so Baldwin refines the defensible set $\{a,c\}$. For Graphs G and H, the Borda loser is either $b$ or $c$; if $b$ is eliminated then $c$ wins; if $c$ is eliminated then $a$ wins, refining the defensible set $\{a,c\}$. For Graphs I and J, $c$ is eliminated, followed by $b$, so $a$ wins, refining the defensible set $\{a,c\}$. For Graphs K and L, $c$ is eliminated, followed by $b$, so $a$ wins, refining the defensible set $\{a\}$.}
	On the profile \href{https://voting.ml/?profile=4acb-5bac-3cab-5cba}{$4acb+5bac+3cab+5cba$}, Baldwin selects $\{a\}$ while Black's rule, leximin, and strict Nanson (the leaves of the diagram) all select $\{c\}$, so Baldwin is not refined by any other rule in the diagram. On the other hand, on the profile \href{https://voting.ml/?profile=1ABC-3BCA-4CAB}{$1abc+3bca+4cab$}, Baldwin's rule selects $\{b, c\}$ while Banks' rule selects $\{a, c\}$, so Baldwin does not refine rules other than the ones indicated in the diagram.
\end{proof}

In the absence of majority ties (i.e., when only Graphs A, C, G, I, and K are considered), even more rules coincide. The top cycle, all variants of the uncovered set, Schwartz, and Copeland are equivalent because they return all candidates when the majority graph is a three-cycle. Moreover, stable voting coincides with maximin,
and Nanson's rule coincides with leximin. When restricting attention to the generic case where, on top of the absence of majority ties, no two margins are equal (called \emph{uniquely weighted} profiles by \citealp{HoPa20a}), maximin is resolute and hence coincides with \emph{all} its refinements. 
In \Cref{tbl:rules-ordinal}, only Graph G and Graph K are generic in this sense, and we can see that the rules mentioned in the table collapse to a hierarchy of only three Condorcet extensions: the top cycle, the defensible set, and maximin.

\section{Reinforcement}
\label{sec:reinforcement}

In this section, we study the reinforcement axiom, which demands that candidates who win for two disjoint electorates should be precisely the winners for the union of these electorates \citep{Youn74a,Youn75a}. Suppose that $P=(\succ_i)_{i\in N}$ and $P'=(\succ_i)_{i\in N'}$ are preference profiles defined over electorates $N$ and $N'$ that are disjoint: $N\cap N'=\emptyset$. We can, therefore, merge these profiles into a single profile $P+P'= (\succ_i)_{i\in N\cup N'}$ defined on the electorate $N \cup N'$. 
\begin{definition}[Reinforcement]
A social choice function $f$ satisfies \emph{reinforcement} if for all profiles $P$ and $P'$ on disjoint electorates, we have
$f(P + P') = f(P) \cap f(P')$ whenever $f(P) \cap f(P') \neq \emptyset$.
\end{definition}
In other words, if there is a candidate $a$ who wins in both $P$ and $P'$, then $a$ also wins in the combined profile $P + P'$, and all other winners in $P + P'$ must also win in both $P$ and $P'$.

\citet[Theorem 2]{YoLe78a} famously showed that reinforcement is in conflict with Condorcet-consistency. Their result is only phrased in terms of \emph{weak} Condorcet winners, but they briefly mention a variant of this result that can be turned into a 13-voter impossibility for Condorcet extensions \citep[see, e.g.,][Theorem 9.2]{Moul88a}.
We will show that the incompatibility remains intact even when there are fewer than 13 voters.

\subsection{Simple Proof for Nine Voters}

We begin by giving a very simple proof of the incompatibility for 9 voters. It can be obtained by essentially optimizing the existing proofs of \citet{YoLe78a} and \citet{Moul88a} with respect to the number of required voters. This proof was found by Keyvan Kardel and also appears in the handbook chapter by \citet[Prop.~2.5]{Zwic15a}.

A remarkable feature of this proof is that it only uses a weak form of reinforcement:
A social choice function satisfies \emph{subset-reinforcement} if for all $P_1$ and $P_2$ defined on disjoint electorates, we have $f(P_1) \cap f(P_2) \subseteq f(P_1 + P_2)$. This property was also discussed by \citet{YoLe78a}.

\begin{theorem}
	\label{thm:reinf-9}
	Every Condorcet extension violates subset-reinforcement when $|N^*| \ge 9$.
\end{theorem}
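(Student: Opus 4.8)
The plan is to produce, for an arbitrary Condorcet extension $f$, a single witnessing instance of the failure: disjoint profiles $P_1,P_2$ and a candidate $x$ with $x\in f(P_1)\cap f(P_2)$ but $x\notin f(P_1+P_2)$. The only facts about $f$ that I may use are Condorcet-consistency (so $f$ outputs a singleton on any profile having a Condorcet winner) and non-emptiness of its output. Accordingly, the cleanest way to guarantee $x\notin f(P_1+P_2)$ is to arrange that $P_1+P_2$ has a Condorcet winner \emph{different from} $x$, which pins $f(P_1+P_2)$ to a singleton avoiding $x$.

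For $P_1$ I would take a majority cycle $C$ on six voters with no Condorcet winner --- two copies each of $abc$, $bca$, $cab$ --- whose three cyclic margins all equal $2$. Since $f(C)$ is nonempty it contains some $x\in\{a,b,c\}$, and I do not get to control which. So I prepare, for each candidate, a three-voter \emph{anchor} profile on an electorate disjoint from $C$, designed so that adding it to $C$ makes a \emph{different} candidate the Condorcet winner. For $x=a$ the anchor is $D^{(a)}=\{acb,acb,cab\}$: it has Condorcet winner $a$ (so $f(D^{(a)})=\{a\}$), while in $C+D^{(a)}$ the margins become $m_{a,b}=5$, $m_{c,a}=1$, $m_{c,b}=1$, so that $c$ beats both $a$ and $b$ and is the (strict) Condorcet winner. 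Subset-reinforcement then forces $a\in f(C)\cap f(D^{(a)})\subseteq f(C+D^{(a)})=\{c\}$, a contradiction. The anchors for $x=b$ and $x=c$ are the images of $D^{(a)}$ under the rotation $a\to b\to c\to a$, under which $C$ is invariant; the rotated computation shows $C+D^{(b)}$ has Condorcet winner $a$ and $C+D^{(c)}$ has Condorcet winner $b$, so each of the three possible cases $x\in f(C)$ yields a contradiction. Note that this uses neither anonymity nor neutrality: the winners of $D^{(x)}$ and of $C+D^{(x)}$ are facts about margins, to which Condorcet-consistency applies regardless. The single invoked instance uses $6+3=9$ voters, which is why nine potential voters suffice.

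The one genuinely non-obvious point --- and the thing I would spend effort on --- is that a strict Condorcet winner is robust under adding profiles (margins add over disjoint electorates), so one cannot flip $x$ out of the winning position with an anchor that merely favors $x$'s rivals evenly. The escape is to let the anchor keep $x$ on top of its own two comparisons while pushing the \emph{third} comparison (the one not involving $x$, here $c$ versus $b$) heavily in one direction, so that in the sum a previously-beaten candidate sweeps. Concretely this forces the cyclic margins of $C$ to be at least $2$ (giving the anchor room to neutralize $x$'s edge over its eventual conqueror) yet small enough to keep $|C|$ at six, and it requires the anchors to respect parity (all margins of the nine-voter sum are odd). The main obstacle is thus the explicit minimization --- exhibiting the smallest cycle and anchors meeting all these numeric constraints at once --- after which the logic reduces to a single clean application of subset-reinforcement together with Condorcet-consistency and non-emptiness.
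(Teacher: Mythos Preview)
Your proposal is correct and takes essentially the same approach as the paper: the same six-voter double cycle $C$, the same three-voter anchor $D^{(a)}=\{acb,acb,cab\}$ (and its rotations), and the same margin computations showing that $C+D^{(a)}$ has Condorcet winner $c$. The only cosmetic difference is that the paper dispatches the three cases with a ``without loss of generality, $a\in f(P_1)$'' (implicitly relying on the rotated constructions you spell out), whereas you make the three anchors $D^{(a)},D^{(b)},D^{(c)}$ explicit---which is arguably cleaner given that neutrality is not assumed.
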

\begin{proof}
	Let $f$ be a Condorcet extension satisfying subset-reinforcement. Let $N_1$ and $N_2$ be two disjoint electorates consisting of $6$ and $3$ voters, respectively. Without loss of generality, let $N_1=\{1,2,3,4,5,6\}$ and $N_2 = \{7,8,9\}$. Consider the following profile $P_1$ with 6 voters, which is a ``double Condorcet cycle'':
	\begin{align*}
	\begin{array}{cccccc}
		\toprule
		{\succ_1} & {\succ_2} & {\succ_3} & {\succ_4} & {\succ_5} & {\succ_6}\\
		\midrule
		a & a & b & b & c & c \\
		b & b & c & c & a & a \\
		c & c & a & a & b & b \\
		\bottomrule
	\end{array}
	\hspace{2.43cm}
	\qquad&
	\begin{tikzpicture}[scale=0.75,baseline=0.2cm]
		\tikzstyle{mynode}=[fill=white,circle,draw,minimum size=1.5em,inner sep=2pt]
		\tikzstyle{mylabel}=[fill=white,circle,inner sep=0.5pt]
		\node[mynode] (a) at (210:1.5) {$a$};
		\node[mynode] (c) at (90:1.5) {$c$};
		\node[mynode] (b) at (330: 1.5) {$b$};
		\draw[-Latex] (a) edge node[mylabel] {\small$2$} (b);
		\draw[-Latex] (b) edge node[mylabel] {\small$2$} (c);
		\draw[-Latex] (c) edge node[mylabel] {\small$2$} (a);
	\end{tikzpicture}
	\intertext{%
	Because $f(P_1) \neq \emptyset$, we may assume without loss of generality that $a \in f(P_1)$ (otherwise relabel the candidates).
	Now consider the following profile $P_2$ with 3 voters:
	}
	\begin{array}{ccc}
		\toprule
		{\succ_7} & {\succ_8} & {\succ_9}\\
		\midrule
		\condorcet{a} & \condorcet{a} & c \\
		c & c & \condorcet{a} \\
		b & b & b \\
		\bottomrule
	\end{array}
	\qquad&
	\begin{tikzpicture}[scale=0.75,baseline=0.2cm]
		\tikzstyle{mynode}=[fill=white,circle,draw,minimum size=1.5em,inner sep=2pt]
		\tikzstyle{mylabel}=[fill=white,circle,inner sep=0.5pt]
		\node[mynode] (a) at (210:1.5) {$\condorcet{a}$};
		\node[mynode] (c) at (90:1.5) {$c$};
		\node[mynode] (b) at (330: 1.5) {$b$};
		\draw[-Latex] (a) edge node[mylabel] {\small$3$} (b);
		\draw[-Latex] (c) edge node[mylabel] {\small$3$} (b);
		\draw[-Latex] (a) edge node[mylabel] {\small$1$} (c);
	\end{tikzpicture}
	\intertext{%
	In this profile, $a$ is the Condorcet winner, so $f(P_2) = \{a\}$ because $f$ is a Condorcet extension.
	Note that $P_1$ and $P_2$ are defined over disjoint electorates, and that $a \in f(P_1) \cap f(P_2)$.
	Because $f$ satisfies subset-reinforcement, we have $f(P_1) \cap f(P_2) \subseteq f(P_1 + P_2)$, and therefore $a \in f(P_1 + P_2)$.
	However, in the combined profile $P_1 + P_2$, $c$ is the Condorcet winner:
	}
	\begin{array}{ccccccccc}
		\toprule
		{\succ_1} & {\succ_2} & {\succ_3} & {\succ_4} & {\succ_5} & {\succ_6} & {\succ_7} & {\succ_8} & {\succ_9} \\
		\midrule
		a & a & b & b & \condorcet{c} & \condorcet{c} & a & a & \condorcet{c}\\
		b & b & \condorcet{c} & \condorcet{c} & a & a & \condorcet{c} & \condorcet{c} & a\\
		\condorcet{c} & \condorcet{c} & a & a & b & b & b & b & b\\
		\bottomrule
	\end{array}
	\qquad&
	\begin{tikzpicture}[scale=0.75,baseline=0.2cm]
		\tikzstyle{mynode}=[fill=white,circle,draw,minimum size=1.5em,inner sep=2pt]
		\tikzstyle{mylabel}=[fill=white,circle,inner sep=0.5pt]
		\node[mynode] (a) at (210:1.5) {$a$};
		\node[mynode] (c) at (90:1.5) {$\condorcet{c}$};
		\node[mynode] (b) at (330: 1.5) {$b$};
		\draw[-Latex] (a) edge node[mylabel] {\small$5$} (b);
		\draw[-Latex] (c) edge node[mylabel] {\small$1$} (b);
		\draw[-Latex] (c) edge node[mylabel] {\small$1$} (a);
	\end{tikzpicture}
	\end{align*}
	Thus, since $f$ is a Condorcet extension, we have $f(P_1 + P_2) = \{c\}$, a contradiction.
\end{proof}

\Cref{thm:reinf-9} also holds for ``superset-reinforcement'', which requires that for all profiles $P$ and $P'$, we have $f(P) \cap f(P') \supseteq f(P + P')$ whenever $f(P) \cap f(P') \neq \emptyset$. To see this, we can slightly adapt the proof by noting that superset-reinforcement implies that $f(P_1 + P_2) \subseteq \{a\}$.

Note that the proof of \Cref{thm:reinf-9} only invokes (subset-)reinforcement for pairs of profiles where at least one of them has a Condorcet winner. This is a particularly compelling application of reinforcement since it has the flavor of participation: if $x$ is a winner in a profile and we add some new voters to it who agree that $x$ is the strongest candidate (in the sense that $x$ is the Condorcet winner with respect to the new voters), then $x$ should still be a winner.

\subsection{Proof using Anonymity}

Are 9 voters necessary to obtain the incompatibility between reinforcement and being a Condorcet extension? As it turns out, the result holds even for 8 voters.

\begin{fact}
	\label{fact:reinf-8}
	Every Condorcet extension violates reinforcement when $|N^*| \ge 8$.
\end{fact}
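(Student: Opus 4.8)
The plan is to exploit the full strength of reinforcement---the equality $f(P+P')=f(P)\cap f(P')$, not merely the inclusion used in \Cref{thm:reinf-9}---together with anonymity and neutrality. Anonymity lets me represent each profile by its vector of six order-counts, so a profile on at most $8$ voters is just a way of writing some $n\le 8$ as six nonnegative parts; this turns the problem into a finite combinatorial object and is presumably why the section is organized around anonymity. Two structural lemmas would drive the argument. First, if a profile $S$ is invariant under the cyclic relabeling $a\to b\to c\to a$ (equivalently, $S$ is assembled from whole forward and reverse $3$-cycles), then neutrality forces $f(S)$ to be invariant under that relabeling; since the only nonempty invariant subset of $\{a,b,c\}$ is $\{a,b,c\}$ itself, we get $f(S)=\{a,b,c\}$. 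Second, reinforcement then yields the absorption identity $f(S+P)=f(S)\cap f(P)=f(P)$ for every profile $P$, because $f(S)\cap f(P)=f(P)\neq\emptyset$: adding complete Condorcet cycles never changes the winner set.

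Using these, I would look for a profile $P$ and a cyclically symmetric $S$ such that $P$ and $S+P$ have \emph{different} Condorcet winners, since then $\{x\}=f(P)=f(S+P)=\{y\}$ gives the contradiction. The catch is a parity obstruction that explains why the clean argument of \Cref{thm:reinf-9} does not simply shrink by one voter: on $8$ voters all margins are even, and any cyclically symmetric $S$ shifts the three cyclic margins $m_{a,b},m_{b,c},m_{c,a}$ by the \emph{same} integer $\delta$. Flipping the Condorcet winner from $a$ to $c$ then requires $m_{c,a}$ to cross zero (so $\delta$ exceeds $|m_{c,a}|$) while $m_{c,b}$ stays positive (so $m_{b,c}$ is large and negative, forcing $P$ to have many voters). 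A short count shows $|P|+|S|\le 8$ together with $|S|\ge 3\delta$ forces $\delta\le 1$, which is impossible for an integer crossing; hence no single pair $(S,P)$ fits inside the budget.

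To beat this obstruction, the absorption step must instead be applied to a \emph{non-symmetric} profile $G$ with $f(G)=\{a,b,c\}$, since such a $G$ can shift the three margins unequally and thus flip the winner efficiently within $8$ voters. Establishing $f(G)=\{a,b,c\}$, however, cannot come from symmetry alone and must be forced through a chain of reinforcement equalities linking several profiles of total size at most $8$. The hard part will be exactly this bookkeeping: pinning down a small family of profiles whose pairwise reinforcement constraints are jointly unsatisfiable for every Condorcet extension. Because anonymity makes the relevant profile space finite and the constraints are purely combinatorial, the cleanest way to certify that no consistent assignment of winner sets exists is an exhaustive (for instance, SAT-based) search over all order-count vectors with $n\le 8$, consistent with the methodology announced in the introduction; the resulting unsatisfiable core would then furnish the explicit sequence of profiles witnessing the violation.
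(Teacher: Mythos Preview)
Your proposal rests on assumptions that are \emph{not} part of the statement. \Cref{fact:reinf-8} concerns arbitrary Condorcet extensions, with no anonymity and no neutrality hypothesis. Your first ``structural lemma'' (that a cyclically symmetric profile must yield $f(S)=\{a,b,c\}$) uses neutrality, and your reduction to order-count vectors uses anonymity; neither is available here. The paper is explicit about this separation: \Cref{fact:reinf-8} is the bare statement, \Cref{thm:reinf-8} adds anonymity (and has a human proof that does \emph{not} use neutrality), and \Cref{thm:reinf-5-neutrality} adds both anonymity and neutrality, at which point the bound drops to $5$ voters. So your neutrality-based absorption argument, if it worked, would prove something strictly weaker than what is already established with fewer voters in \Cref{thm:reinf-5-neutrality}, and it does not touch \Cref{fact:reinf-8} at all.

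As for the paper's own ``proof'': it states outright that no human-readable argument is known and that the result was obtained by SAT, with minimal unsatisfiable cores involving hundreds or thousands of profiles. The SAT encoding does \emph{not} assume anonymity; rather (see the footnote), it restricts attention to the orbit under voter and candidate permutations of the profiles appearing in the proof of \Cref{thm:reinf-8}, which is a soundness-preserving search-space reduction, not an added axiom. Your closing suggestion---run a SAT search over anonymous profiles---would at best re-derive \Cref{thm:reinf-8}, not \Cref{fact:reinf-8}. To actually address the fact as stated, you would need to encode non-anonymous profiles (or use an orbit-based reduction as the paper does) and drop the neutrality step entirely.
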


We discovered this fact using a SAT solver,%
\footnote{SAT solving is a computational method that has been leveraged to obtain impossibility theorems in social choice \citep[see, e.g.,][]{GePe17a}. We have also used it to find the proofs of \Cref{thm:reinf-8} and \Cref{thm:reinf-5-neutrality}. To check \Cref{fact:reinf-8}, our Boolean formula did not list all non-anonymous profiles for up to $8$ voters since there are $7^8 > 5$ million of them. Instead, similarly to the approach of \citet[Section 5.2]{BBPS21a}, it was sufficient to list all profiles that can be obtained via voter and candidate permutations from the profiles used in the proof of \Cref{thm:reinf-8}. To encode reinforcement, add a variable $v_{P_1, P_2}$ for each pair $P_1, P_2$ of profiles indicating whether $f(P_1) \cap f(P_2) \neq \emptyset$; then for all $x \in A$ add clauses $(x \in f(P_1) \land x \in f(P_2)) \to v_{P_1, P_2}$ and $(v_{P_1, P_2} \land x \in f(P_1 \cup P_2)) \to x \in f(P_i)$ for $i = 1,2$, and $(v_{P_1, P_2} \land x \in f(P_1) \land x \in f(P_2)) \to x \in f(P_1 \cup P_2)$.}
and we have no human-readable proof for it. The minimal unsatisfiable subsets of clauses (a measure of proof complexity) that we were able to extract argue about hundreds or even thousands of profiles.

However, we can give a proof if we assume anonymity in addition. Anonymity is particularly natural in a variable-electorate model and allows us to apply reinforcement without having to ensure that electorates are disjoint.

\begin{theorem}
	\label{thm:reinf-8}
	 Every anonymous Condorcet extension violates reinforcement when $|N^*| \ge 8$.
\end{theorem}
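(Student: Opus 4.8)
The plan is to work in the anonymous world, where a profile is just a multiset of linear orders and reinforcement reads $f(P+P') = f(P)\cap f(P')$ whenever the right-hand side is nonempty; anonymity is precisely what lets me freely regroup a fixed pool of at most eight voters into overlapping sub-profiles without ever worrying about disjointness of electorates. The engine of the argument is one deduction that I will call the \emph{flip lemma}: if $R$ has a Condorcet winner $x$ and $P+R$ has a Condorcet winner $y\neq x$, then $x\notin f(P)$. Indeed, $f(R)=\{x\}$ and $f(P+R)=\{y\}$ by Condorcet-consistency, so if $x$ were in $f(P)$ then $f(P)\cap f(R)=\{x\}\neq\emptyset$, and reinforcement would force $f(P+R)=\{x\}\neq\{y\}$. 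I would first collect the handful of facts of this shape that are available within the voter budget.

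Second, I would try to push the $6+3$ construction behind \Cref{thm:reinf-9} down by one voter. The obstruction there is symmetry: to eliminate all three candidates from $f(P)$ by the flip lemma one wants $P$ to be the double Condorcet cycle, whose $(2,2,2)$ margin graph already costs six voters, leaving only two for each auxiliary profile $R_x$, too few to realise the ``narrow win, strong loss'' structure the flip lemma needs. The way around this is to allow the hub $P$ to carry a majority tie: a profile with margins $a\to b$, $c\to a$, and $b\sim c$ still has no Condorcet winner yet can be realised with only four voters, after which a single extra voter suffices to flip out a candidate. The trouble is that such an asymmetric hub favours one candidate, which the flip lemma cannot dislodge, so a single hub no longer eliminates all three.

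The decisive step, and the one that genuinely uses both directions of reinforcement (rather than the single direction that already suffices for \Cref{thm:reinf-9}), is to turn these partial eliminations into a clash on one profile. Concretely, I would look for a single profile $X$ on eight voters, without a Condorcet winner, admitting two decompositions $X=P_1+R_1=P_2+R_2$, where $R_1,R_2$ are Condorcet-winner profiles with distinct winners $a$ and $b$, and where the memberships $a\in f(P_1)$ and $b\in f(P_2)$ have been secured beforehand (either because $P_i$ itself has the relevant Condorcet winner, or by a prior application of reinforcement to a smaller decomposition). The equality then forces $f(X)=f(P_1)\cap\{a\}=\{a\}$ and simultaneously $f(X)=\{b\}$, the desired contradiction. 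The main obstacle is exactly the bookkeeping of the no-Condorcet-winner sub-profiles: their outputs are not pinned down by Condorcet-consistency, so one must propagate membership facts through a short chain of decompositions while respecting both the eight-voter cap and the parity constraint that all margins share the parity of the electorate. Finding a chain short enough to be human-checkable is the delicate part—presumably why the unrestricted \Cref{fact:reinf-8} had to be located by a SAT solver—and I would expect to settle it by a guided search over the few isomorphism types of small tie-free and tie-carrying margin graphs rather than by a slick closed-form construction.
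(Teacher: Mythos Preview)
Your toolkit is exactly right—the flip lemma is the engine of the paper's argument, and your observation that the eight-voter bound genuinely needs the \emph{equality} in reinforcement (not just the subset direction) is the key structural point. But your proposal stops at an outline, and the specific template you sketch for the final contradiction does not match the paper's and faces an obstacle you do not resolve: you want $X=P_1+R_1=P_2+R_2$ with both $R_i$ Condorcet-winner profiles and $a\in f(P_1)$, $b\in f(P_2)$ secured beforehand. The single WLOG on the three-voter cycle pins one candidate into one non-Condorcet profile; securing a \emph{second} distinct membership in a second non-Condorcet profile still has to be earned, and your sketch does not say how.

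The paper's move is different. Take the three-voter cycle $P_1$ and assume WLOG $a\in f(P_1)$. Now pick \emph{two} four-voter profiles $P_2,P_3$ without Condorcet winners and apply the flip lemma \emph{twice to each}: once pairing $P_i$ with the cycle $P_1$ (the seven-voter sum has Condorcet winner $c$, excluding $a$ from $f(P_i)$), and once pairing $P_i$ with the single-voter profile $P_0=bac$ (the five-voter sum has Condorcet winner $a$, excluding $b$ from $f(P_i)$). This pins $f(P_2)=f(P_3)=\{c\}$, whence reinforcement gives $f(P_2+P_3)=\{c\}$. But the eight-voter profile $P_2+P_3$ also decomposes as $P_1+P_5$, where $P_5$ (five voters) has Condorcet winner $a$; since $a\in f(P_1)\cap f(P_5)$, reinforcement forces $a\in f(P_2+P_3)$, the contradiction.

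So the decisive second decomposition is not into a Condorcet-winner piece plus something, but into two non-Condorcet pieces each already \emph{pinned to the same singleton} by repeated flip-lemma applications—this is the idea your outline does not surface. The paper's construction is fully concrete (no search left to do): $P_2=abc+acb+2\,cab$, $P_3=2\,acb+bca+cab$, $P_5=3\,acb+2\,cab$, and no profile ever formed exceeds eight voters.
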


\begin{proof}
Let $f$ be an anonymous Condorcet extension satisfying reinforcement.
Let $P_1$ be the Condorcet cycle profile:
\[
\begin{array}{ccc}
\toprule
\multicolumn{3}{c}{P_1}\\
\midrule
a & b & c\\
b & c & a\\
c & a & b\\
\bottomrule
\end{array}
\]
Note that by anonymity, we do not need to assign each preference relation to a specific voter $i\in N^*$.
Because $f(P_1) \neq \emptyset$, we may assume without loss of generality that $a \in f(P_1)$.

Consider the following combination of two profiles:
\[
\begin{array}{ccc}
\toprule
\multicolumn{3}{c}{P_1}\\
\midrule
a & b & c\\
b & c & a\\
c & a & b\\
\bottomrule
\end{array}
+
\begin{array}{cccc}
\toprule
\multicolumn{4}{c}{P_2}\\
\midrule
a & a & c & c\\
b & c & a & a\\
c & b & b & b\\
\bottomrule
\end{array}
=
\begin{array}{ccccccc}
\toprule
\multicolumn{7}{c}{P_7}\\
\midrule
a & a & a & b & \condorcet{c} & \condorcet{c} & \condorcet{c}\\
b & b & \condorcet{c} & \condorcet{c} & a & a & a\\
\condorcet{c} & \condorcet{c} & b & a & b & b & b\\
\bottomrule
\end{array}
\]
Throughout the proof, the \condorcet{colored} candidate marks the Condorcet winner of a profile, so $P_7$ has Condorcet winner $c$. From reinforcement and $f$ being Condorcet-consistent, we can infer that $a \not \in f(P_2)$, since otherwise we have $a \in f(P_1) \cap f(P_2)$ and would therefore also have $a \in f(P_7) = \{c\}$, a contradiction.

Similarly, we can deduce from
\[
\begin{array}{ccc}
\toprule
\multicolumn{3}{c}{P_1}\\
\midrule
a & b & c\\
b & c & a\\
c & a & b\\
\bottomrule
\end{array}
+
\begin{array}{cccc}
\toprule
\multicolumn{4}{c}{P_3}\\
\midrule
a & a & b & c\\
c & c & c & a\\
b & b & a & b\\
\bottomrule
\end{array}
=
\begin{array}{ccccccc}
\toprule
\multicolumn{7}{c}{P_8}\\
\midrule
a & a & a & b & b & \condorcet{c} & \condorcet{c}\\
b & \condorcet{c} & \condorcet{c} & \condorcet{c} & \condorcet{c} & a & a\\
\condorcet{c} & b & b & a & a & b & b\\
\bottomrule
\end{array}
\]
that $a \not\in f(P_3)$.

Next, consider
\[
\begin{array}{c}
\toprule
\multicolumn{1}{c}{P_0}\\
\midrule
\condorcet{b}\\
a\\
c\\
\bottomrule
\end{array}
+
\begin{array}{cccc}
\toprule
\multicolumn{4}{c}{P_2}\\
\midrule
a & a & c & c\\
b & c & a & a\\
c & b & b & b\\
\bottomrule
\end{array}
=
\begin{array}{ccccc}
\toprule
\multicolumn{5}{c}{P_4}\\
\midrule
\condorcet{a} & \condorcet{a} & b & c & c\\
b & c & \condorcet{a} & \condorcet{a} & \condorcet{a}\\
c & b & c & b & b\\
\bottomrule
\end{array}
\]
from which we can deduce that $b \not \in f(P_2)$, since otherwise $b \in f(P_0) \cap f(P_2)$ and hence by reinforcement $b \in f(P_4)$, contradicting that $f$ is a Condorcet extension.

Similarly, from
\[
\begin{array}{c}
\toprule
\multicolumn{1}{c}{P_0}\\
\midrule
\condorcet{b}\\
a\\
c\\
\bottomrule
\end{array}
+
\begin{array}{cccc}
\toprule
\multicolumn{4}{c}{P_3}\\
\midrule
a & a & b & c\\
c & c & c & a\\
b & b & a & b\\
\bottomrule
\end{array}
=
\begin{array}{ccccc}
\toprule
\multicolumn{5}{c}{P_6}\\
\midrule
\condorcet{a} & \condorcet{a} & b & b & c\\
c & c & \condorcet{a} & c & \condorcet{a}\\
b & b & c & \condorcet{a} & b\\
\bottomrule
\end{array}
\]
we can deduce that $b \not\in f(P_3)$.

We have shown that $a,b \not\in f(P_2)$ and that $a,b \not\in f(P_3)$. It follows that
\[
f(P_2) = f(P_3) = \{c\}.
\]

Now, because
\[
\begin{array}{cccc}
\toprule
\multicolumn{4}{c}{P_2}\\
\midrule
a & a & c & c\\
b & c & a & a\\
c & b & b & b\\
\bottomrule
\end{array}
+
\begin{array}{cccc}
\toprule
\multicolumn{4}{c}{P_3}\\
\midrule
a & a & b & c\\
c & c & c & a\\
b & b & a & b\\
\bottomrule
\end{array}
=
\begin{array}{cccccccc}
\toprule
\multicolumn{8}{c}{P_9}\\
\midrule
a & a & a & a & b & c & c & c\\
b & c & c & c & c & a & a & a\\
c & b & b & b & a & b & b & b\\
\bottomrule
\end{array}
\]
it follows from reinforcement that $f(P_9) = \{c\}$.

However, because $a \in f(P_1)$ and because $f$ is a Condorcet extension, we get from
\[
\begin{array}{ccc}
\toprule
\multicolumn{3}{c}{P_1}\\
\midrule
a & b & c\\
b & c & a\\
c & a & b\\
\bottomrule
\end{array}
+
\begin{array}{ccccc}
\toprule
\multicolumn{5}{c}{P_5}\\
\midrule
\condorcet{a} & \condorcet{a} & \condorcet{a} & c & c\\
c & c & c & \condorcet{a} & \condorcet{a}\\
b & b & b & b & b\\
\bottomrule
\end{array}
=
\begin{array}{cccccccc}
\toprule
\multicolumn{8}{c}{P_9}\\
\midrule
a & a & a & a & b & c & c & c\\
b & c & c & c & c & a & a & a\\
c & b & b & b & a & b & b & b\\
\bottomrule
\end{array}
\]
that $a \in f(P_9)$ by reinforcement. This contradicts our prior conclusion that $f(P_9) = \{c\}$.
\end{proof}

\subsection{Proof using Anonymity and Neutrality}
\label{sec:neutrality}

When additionally requiring neutrality, the impossibility can be proven using only 5 voters.

\begin{theorem}
	\label{thm:reinf-5-neutrality}
	Every anonymous and neutral Condorcet extension violates reinforcement when $|N^*| \ge 5$.
\end{theorem}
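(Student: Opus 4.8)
The plan is to exploit neutrality to pin down the behaviour of $f$ on the Condorcet cycle, and then to show that adjoining a cycle to any profile cannot change its winner set. Let $P_1$ denote the three-voter Condorcet cycle with voters $abc$, $bca$, $cab$. Since $P_1$ is invariant, as an anonymous profile, under the cyclic relabelling $\sigma\colon a\mapsto b\mapsto c\mapsto a$, anonymity and neutrality give $\sigma(f(P_1)) = f(\sigma(P_1)) = f(P_1)$, so $f(P_1)$ is a union of $\sigma$-orbits on $A$. As the only nonempty such union is $A$ itself, I would conclude $f(P_1) = \{a,b,c\}$. Consequently, for every profile $P'$ on an electorate disjoint from that of $P_1$, reinforcement yields
\[
f(P_1 + P') = f(P_1) \cap f(P') = A \cap f(P') = f(P'),
\]
since $f(P_1)\cap f(P') = f(P')\neq\emptyset$ so that reinforcement is applicable. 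In other words, prepending a Condorcet cycle leaves the winner set unchanged.

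With this lemma in hand, the second step is to choose a two-voter profile $P'$ whose symmetry clashes with the Condorcet winner produced by the combined profile. I would take $P' = \{\,abc,\ bac\,\}$. A direct margin computation shows that $P_1 + P'$ (five voters) has $m_{a,b} = 1$, $m_{a,c} = 1$, and $m_{b,c} = 3$, so $a$ is its Condorcet winner; hence $f(P_1 + P') = \{a\}$, and by the lemma $f(P') = \{a\}$. On the other hand, $P'$ is invariant, as an anonymous profile, under the transposition $\tau\colon a\leftrightarrow b$, so anonymity and neutrality force $\tau(f(P')) = f(P')$. But $\tau(\{a\}) = \{b\} \neq \{a\}$, a contradiction. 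Since $P_1$ uses three voters and $P'$ uses two further, disjoint voters, the whole argument lives inside an electorate of five voters, which is exactly why we require $|N^*| \ge 5$.

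The conceptual crux is the ``cycle-is-neutral'' lemma: neutrality upgrades the routine observation $f(P_1)\neq\emptyset$ to the much stronger $f(P_1) = A$, after which reinforcement makes the Condorcet cycle act as an identity element for profile combination. The only genuine design problem—and the part that takes a little searching—is to find a minimal profile $P'$ that is fixed by some transposition $\tau$ yet, when combined with the cycle, produces a Condorcet winner \emph{not} fixed by $\tau$; the profile $\{abc,bac\}$ achieves both, with $\tau = (a\,b)$ fixing $c$ while the combined profile crowns $a$. I expect verifying the margin graph of $P_1 + P'$ and the $\tau$-invariance of $P'$ to be entirely routine; the only point demanding care is ensuring that the two sub-electorates are disjoint, so that reinforcement (rather than merely anonymity) is legitimately applicable.
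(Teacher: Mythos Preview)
Your proof is correct and uses the same two key profiles as the paper: the three-voter Condorcet cycle and the two-voter profile $\{abc,\,bac\}$, combined into the same five-voter profile with Condorcet winner $a$. The organisation differs slightly: the paper first establishes $f(\{abc,bac\}) = \{a,b\}$ via an auxiliary argument (adding a single $cab$ voter to rule out $c$), and only then invokes reinforcement with the cycle to get $f(P_2+P_3)=\{a,b\}\neq\{a\}$. Your route is more direct: from $f(\text{cycle})=A$ and reinforcement you read off $f(\{abc,bac\}) = f(\text{cycle}+\{abc,bac\}) = \{a\}$ immediately, and the clash with the $(a\,b)$-symmetry of $\{abc,bac\}$ is the contradiction. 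This eliminates the paper's detour through the single-voter profile entirely, so your argument is a genuine simplification of theirs while remaining within the same five-voter electorate.
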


\begin{proof}
	Let $f$ be an anonymous and neutral Condorcet extension satisfying reinforcement. We will consider the following three profiles:
	\[
	\begin{array}{c}
		\toprule
		\multicolumn{1}{c}{P_1}\\
		\midrule
		\condorcet{c}\\
		a\\
		b\\
		\bottomrule
	\end{array}
	\quad
	\begin{array}{cc}
		\toprule
		\multicolumn{2}{c}{P_2}\\
		\midrule
		a & b\\
		b & a\\
		c & c\\
		\bottomrule
	\end{array}
	\quad
	\begin{array}{ccc}
		\toprule
		\multicolumn{3}{c}{P_3}\\
		\midrule
		a & b & c\\
		b & c & a\\
		c & a & b\\
		\bottomrule
	\end{array}
	\]
	
	In profile $P_1$, $c$ is the Condorcet winner, and so $f(P_1) = \{c\}$ because $f$ is a Condorcet extension. 
	Next, we consider $P_2$. We can show that $c \notin f(P_2)$ as follows: If $c \in f(P_2)$, then by reinforcement, we would have $c \in f(P_1 + P_2)$. However, in $P_1 + P_2$, $a$ is the Condorcet winner (throughout this proof, we rearrange voters using anonymity):
	\[
	\begin{array}{ccc}
		\toprule
		\multicolumn{3}{c}{P_1 + P_2}\\
		\midrule
		\condorcet{a} & b & c\\
		b & \condorcet{a} & \condorcet{a}\\
		c & c & b\\
		\bottomrule
	\end{array}
	\]
	Since $f$ is a Condorcet extension, we have $f(P_1 + P_2) = \{a\}$, which contradicts $c \in f(P_1 + P_2)$. Thus, $c \notin f(P_2)$.
	Now, by neutrality and anonymity, and noting that $f(P_2)$ must be non-empty, we have $f(P_2) = \{a, b\}$, as $a$ and $b$ are symmetric in $P_2$.
	Finally, the profile $P_3$ is a Condorcet cycle. By anonymity and neutrality, $f(P_3) = \{a, b, c\}$.
	
	Now let's examine $P_2 + P_3$:
	\[
	\begin{array}{ccccc}
		\toprule
		\multicolumn{5}{c}{P_2 + P_3}\\
		\midrule
		\condorcet{a} & \condorcet{a} & b & b & c\\
		b & b & \condorcet{a} & c & \condorcet{a}\\
		c & c & c & \condorcet{a} & b\\
		\bottomrule
	\end{array}
	\]
	By reinforcement, we must have:
	\[f(P_2 + P_3) = f(P_2) \cap f(P_3) = \{a, b\}\]
	However, in $P_2 + P_3$, $a$ is the Condorcet winner. Thus $f(P_2 + P_3) = \{a\}$, a contradiction.
\end{proof}

The reinforcement axiom was originally introduced by \citet{Youn74a,Youn75a} to give an axiomatic characterization of the class of \emph{scoring rules}. A scoring rule is a social choice function defined by a scoring vector $(s_1, s_2, s_3) \in \mathbb{R}^3$. Given a profile $P$, it then computes the score of each candidate by letting each voter assign $s_1$ points to their top-ranked candidate, $s_2$ points to the second-ranked candidate, and $s_3$ points to their last-ranked candidate. Then the scoring rule selects the candidates with the highest score. A scoring rule is \emph{strictly monotonic} if $s_1 > s_2 > s_3$ and it is \emph{weakly monotonic} if $s_1 \ge s_2 \ge s_3$ and $s_1 > s_3$. Borda's rule is a scoring rule with $s_1 = 2$, $s_2 = 0$, and $s_3 = -2$, where we have chosen the scores so they agree with our earlier definition of the Borda score of a candidate as $\beta_x = \sum_{y \in A} m_{x,y}$. Another classic scoring rule is the \emph{plurality rule} with $s_1 = 1$ and $s_2 = s_3 = 0$, choosing as winners those candidates who are top-ranked most frequently. \citet{Youn75a} showed that if the set of potential voters $N^*$ is infinite, then an anonymous, neutral, and continuous social choice function satisfies reinforcement if and only if it is a scoring rule. Dropping continuity from the axioms, \citet{Youn75a} characterized the class of \emph{composite} scoring rules where additional scoring rules are used to break ties (for example, choosing the plurality winners with the highest Borda score).

Now, because every (composite) scoring rule satisfies anonymity, neutrality, and reinforcement, \Cref{thm:reinf-5-neutrality} implies that every (composite) scoring rule fails Condorcet-consistency on some profile with at most 5 voters. 

This is a new proof of a classic result, first established by \citet[pages clxxvij--clxxviij, translated by \citealp{McHe94a}, pages 137--138]{Cond85a} who constructed an 81-voter profile (\href{https://voting.ml/?profile=30abc-1acb-10cab-29bac-10bca-1cba}{$30abc+acb+10cab+29bac+10bca+cba$}) on which $a$ is the Condorcet winner but every strictly monotonic scoring rule uniquely elects $b$. \citet{Fish84g} gives an example with 7 voters (\href{https://voting.ml/?profile=3abc-2bca-1bac-1cab}{$3abc+2bca+bac+cab$}) that has the same property. For weakly monotonic scoring rules, \citet[Theorem 9.1]{Moul88a} gave a 17-voter example (\href{https://voting.ml/?profile=6abc-3cab-4bac-4bca}{$6abc+3cab+4bac+4bca$}) such that $a$ is the Condorcet winner and every weakly monotonic scoring rule uniquely elects $b$. Moulin suggested that this example is minimal, but in fact an 11-voter example with the same property is minimal (\href{https://voting.ml/?profile=4abc-3bca-2bac-2cab}{$4abc+3bca+2bac+2cab$}, computed by Christian Stricker). Because in these examples, $b$ is the unique scoring winner in each case, we can deduce that no refinement of a (weakly) monotonic scoring rule can be a Condorcet extension. \citet[][Theorem 5]{Webe02a} gives a related example.

The examples discussed in the previous paragraph are all ``universal counterexamples'', while the proof of \Cref{thm:reinf-5-neutrality} is based on reasoning across several profiles. However, by an observation of Florian Grundbacher, the profile $P_2 + P_3$ in that proof  (\href{https://voting.ml/?profile=2abc-1bac-1bca-1cab}{$2abc+bac+bca+cab$}) can also function as a universal counterexample, showing that no scoring rule (including non-monotonic or composite ones) can always \emph{uniquely} select the Condorcet winner. This is because in this profile, $a$ is the Condorcet winner, but $a$ and $b$ are ``rank-indistinguishable'': they both appear twice in first rank, twice in second rank, and once in third rank. Hence, any scoring rule assigns $a$ and $b$ exactly the same score, so that $a$ can never have the uniquely highest score.

Note that these universal counterexamples do not quite imply \Cref{thm:reinf-5-neutrality}, because such an implication would depend on the characterization of \citet{Youn75a} which requires an infinite set of potential voters, while \Cref{thm:reinf-5-neutrality} only requires 5 voters.

\subsection{Lower Bounds}

In this section, we show that the numbers of voters in our incompatibility theorems for reinforcement are minimal. 
To start, \Cref{thm:reinf-5-neutrality} contains a bound of 5 voters. This is the best possible since Black's rule, stable voting, and leximin are anonymous, neutral, and satisfy reinforcement for up to 4 voters. Interestingly, all other Condorcet extensions discussed in \Cref{sec:condorcet-extensions} fail reinforcement even for 4 voters.%
\footnote{Consider the profile \href{https://voting.ml/?profile=1acb-1abc-1bca-1cab}{$acb + (abc + bca + cab)$}, on which every anonymous and neutral Condorcet extension satisfying reinforcement must select $\{a\}$ (this can be seen from the indicated decomposition into two subprofiles). This profile corresponds to Graph D in \Cref{tbl:rules-ordinal}, which indicates that most rules select $\{a, c\}$ on this profile; this is also true for Baldwin. Copeland and Nanson fail reinforcement for the profile \href{https://voting.ml/?profile=2abc-1bac-1bca}{$(abc + bac) + (abc + bca)$}, where they select $\{a,b\}$ on the first subprofile, $\{b\}$ on the second subprofile, but $\{a,b\}$ on the whole profile.}

For our other bounds, we will define an artificial rule that satisfies reinforcement for up to 7 voters and subset-reinforcement for up to 8 voters, which shows that the bounds of \Cref{thm:reinf-8} and \Cref{thm:reinf-9} are the best possible. Given \Cref{thm:reinf-5-neutrality}, this artificial rule necessarily fails neutrality.
It is a refinement of maximin that breaks ties according to an intricate scoring system which we found by solving an integer linear program. This scoring system is non-neutral and specifies for each linear order a number of points that it assigns to its top and its second choice. In addition, for the linear order $abc$, the number of points differs depending on whether the number of voters in the input profile is in $\{2,4,6,8\}$ (Case 1) or not (Case 2). Our rule then returns the maximin winner that obtained the highest score.
\[
\quad
\begin{array}{cc}
	\multicolumn{2}{c}{\text{Case 1}}\\
	\toprule
	a & 11\\
	b & 8\\
	c & 0\\
	\bottomrule
\end{array}
\quad
\begin{array}{cc}
	\multicolumn{2}{c}{\text{Case 2}}\\
	\toprule
	a & 18\\
	b & 13\\
	c & 0\\
	\bottomrule
\end{array}
\quad
\begin{array}{cc}
	\\
	\toprule
	a & 10\\
	c & 7\\
	b & 0\\
	\bottomrule
\end{array}
\quad
\begin{array}{cc}
	\\
	\toprule
	b & 18\\
	a & 11\\
	c & 0\\
	\bottomrule
\end{array}
\quad
\begin{array}{cc}
	\\
	\toprule
	b & 7\\
	c & 0\\
	a & 0\\
	\bottomrule
\end{array}
\quad
\begin{array}{cc}
	\\
	\toprule
	c & 13\\
	a & 5\\
	b & 0\\
	\bottomrule
\end{array}
\quad
\begin{array}{cc}
	\\
	\toprule
	c & 8\\
	b & 0\\
	a & 0\\
	\bottomrule
\end{array}
\]

Using a computer, one can verify that this Condorcet extension satisfies reinforcement up to 7 voters and subset-reinforcement up to 8 voters.\footnote{Code is available at \url{https://gist.github.com/DominikPeters/eeea2456ee4a4cdf2c5aa302b24af7a6}.} It is also anonymous (but not neutral) and satisfies participation as well as monotonicity, which states that improving a winner $x$ in some voters' preference relation while keeping everything else intact should not remove $x$ from the choice set. It fails homogeneity, as one can find a profile with $8$ voters in which $a$ and $c$ are the weak Condorcet winners and $c$ is selected due to the tie-breaking, but doubling the profile makes $a$ the unique winner because different scores are used for $16$ voters.

\section{Participation}
\label{sec:participation}

Throughout this section, we assume that $N^* = \mathbb N$, so the set of potential voters is infinite. This means that it is always possible to copy a profile, and thus, the homogeneity and continuity axioms apply.

The participation axiom demands that voters are never better off by abstaining from an election. For a preference profile $P=({\succ_i})_{i\in N}$ and voter $j \in N$, we write $P_{-j} = ({\succ_i})_{i\in N \setminus\{j\}}$ for the profile obtained from $P$ by deleting $j$. For resolute rules, the definition of the participation axiom is straightforward: voter $j$ should weakly prefer the outcome when $j$ is present to the outcome when $j$ is absent.
\begin{definition}[Resolute participation, \citealp{Moul88b}]
	A resolute social choice function satisfies \emph{resolute participation} if for all profiles $P$ and $j\in N$ with $f(P) = \{x\}$ and $f(P_{-j}) = \{y\}$, we have $x \succeq_j y$.
\end{definition}

\citet{Moul88b} has shown that when there are at least four candidates, then every resolute Condorcet extension violates resolute participation.
This incompatibility holds for 12 or more voters \citep{BGP16c}. 
On the other hand, \citet{Moul88b} proved that for three candidates, maximin with a fixed tie-breaking order satisfies resolute participation.

For irresolute social choice functions, defining participation requires us to compare the sets of candidates $f(P)$ and $f(P_{-j})$ with respect to the preference relation $\succ_j$. This can be done in various ways. We will focus on the following well-studied variant that evaluates sets ``optimistically'', that is, by comparing the most-preferred candidates contained in each set. For a non-empty set $A' \subseteq A$ of candidates, we write $\max_{\succ_i} A'$ for the most-preferred candidate in $A'$ according to $\succ_i$, so that $\max_{\succ_i} A' \succeq_i x$ for all $x \in A'$.
\begin{definition}[Optimist participation]
	A social choice function satisfies \emph{optimist participation} if for all profiles $P$ and all $j\in N$, $\max_{\succ_j} f(P) \succeq_j  \max_{\succ_j} f(P_{-j})$.
\end{definition}
For four or more candidates, \citet{JPG09a} showed that no Condorcet extension satisfies optimist participation \citep[see, also,][Thm.~6]{BGP16c}. For three candidates, maximin satisfies optimist participation. 
To see this, let $P$ be a profile and $j \in N$ be a voter with preference $\succ_j$. Let $g$ be the resolute social choice function obtained from maximin by breaking ties using $\succ_j$ as tie-breaking order. Then $g$ satisfies resolute participation \citep{Moul88b}. Thus,
\[
\textstyle
\max_{\succ_j} f(P) = g(P) \succeq_j  g(P_{-j}) = \max_{\succ_j} f(P_{-j}),
\]
so that maximin satisfies optimist participation.%
\footnote{\label{fn:fishburn}In fact, maximin satisfies a stronger notion of participation based on the Fishburn preference extension \citep{Gard79a}. In particular, writing $X = f_{\text{maximin}}(P)$ and $Y = f_{\text{maximin}}(P_{-j})$, it can be shown that we always have $X \succ_j Y \setminus X$ and $X \setminus Y \succ_j Y$, by applying standard results about the Fishburn preference extension (\citealp[Theorem 3.4]{ErSa09a}; \citealp[Section 3.3(ii)]{BSS19a}) and using the fact that maximin satisfies resolute participation for all ways of breaking ties using a fixed tie-breaking ordering.}

To compare our results with existing results, let us define two additional participation-related conditions. A social choice function satisfies \emph{positive involvement} \citep{Pere01a} if whenever a voter joins a profile in which that voter's top candidate wins, then this candidate still wins. Formally, for all profiles $P$ and all $j\in N$ with $x$ the top-ranked candidate of $j$, we have that if $x \in f(P_{-j})$ then $x \in f(P)$. \citet{Pere01a} also defined an analogous notion of \emph{negative involvement}, which demands that if an agent's least-favorite candidate loses, then it still loses when the agent joins the electorate. We define an alternative version of this condition, \emph{singleton negative involvement}, which says that if $f(P_{-j}) \neq \{x\}$ and $x$ is bottom ranked by $j$, then $f(P) \neq \{x\}$.%
\footnote{While similar in spirit, singleton negative involvement and negative involvement are logically incomparable. However, for resolute social choice functions, they are equivalent.}
For three candidates, it turns out that the conjunction of positive involvement and singleton negative involvement is equivalent to optimist participation, which one can verify by case analysis on all possible combinations of sets $f(P_{-j}) \subseteq A$ and $f(P) \subseteq A$.
\begin{proposition}
	\label{thm:optimist-equivalence}
	A social choice function satisfies optimist participation if and only if it satisfies positive involvement and singleton negative involvement.
\end{proposition}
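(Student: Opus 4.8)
The plan is to prove the equivalence directly by exhaustive case analysis, since the statement is for three candidates and the relevant objects---the sets $f(P_{-j})$ and $f(P)$---range over only finitely many possibilities. I first recall the three conditions at the level of a fixed voter $j$ with preference $\succ_j$. Write out $\succ_j$ as $x_1 \succ_j x_2 \succ_j x_3$, so $x_1$ is $j$'s top choice and $x_3$ is $j$'s bottom choice. Optimist participation says $\max_{\succ_j} f(P) \succeq_j \max_{\succ_j} f(P_{-j})$; positive involvement says $x_1 \in f(P_{-j}) \Rightarrow x_1 \in f(P)$; and singleton negative involvement says $f(P_{-j}) \neq \{x_3\} \Rightarrow f(P) \neq \{x_3\}$. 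Because only the position of candidates in $\succ_j$ matters, and there are $2^3 - 1 = 7$ nonempty subsets of $A$ for each of $f(P_{-j})$ and $f(P)$, the whole claim reduces to checking which of the $49$ pairs $(f(P_{-j}), f(P))$ are permitted by each side.

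For the forward direction (optimist participation implies the two involvement conditions), I would argue contrapositively on each conjunct. If positive involvement fails, then $x_1 \in f(P_{-j})$ but $x_1 \notin f(P)$; then $\max_{\succ_j} f(P_{-j}) = x_1$ (the best possible candidate), while $\max_{\succ_j} f(P)$ is $x_2$ or $x_3$, strictly worse, contradicting optimist participation. If singleton negative involvement fails, then $f(P) = \{x_3\}$ while $f(P_{-j}) \neq \{x_3\}$; the latter forces $\max_{\succ_j} f(P_{-j}) \succeq_j x_2 \succ_j x_3 = \max_{\succ_j} f(P)$, again contradicting optimist participation. So both involvement conditions follow.

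For the reverse direction I would assume both involvement conditions and derive $\max_{\succ_j} f(P) \succeq_j \max_{\succ_j} f(P_{-j})$ by splitting on the value of $\max_{\succ_j} f(P_{-j})$. If $\max_{\succ_j} f(P_{-j}) = x_1$, then $x_1 \in f(P_{-j})$, so positive involvement gives $x_1 \in f(P)$, hence $\max_{\succ_j} f(P) = x_1$ and the inequality holds with equality. If $\max_{\succ_j} f(P_{-j}) = x_2$, the inequality $\max_{\succ_j} f(P) \succeq_j x_2$ can only fail when $\max_{\succ_j} f(P) = x_3$, i.e.\ $f(P) = \{x_3\}$; but $\max_{\succ_j} f(P_{-j}) = x_2$ implies $f(P_{-j}) \neq \{x_3\}$, so singleton negative involvement rules this out. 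If $\max_{\succ_j} f(P_{-j}) = x_3$, then $x_3$ is already the worst candidate, so $\max_{\succ_j} f(P) \succeq_j x_3$ holds trivially. This covers all cases.

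There is no real obstacle here; the only point requiring care is to confirm that the three-candidate restriction is genuinely used, namely that with exactly three candidates the condition ``$\max_{\succ_j} f(P_{-j}) = x_2$'' is equivalent to ``$f(P_{-j})$ contains $x_2$ but not $x_1$,'' and that the only way to be strictly worse than $x_2$ is to land on the unique bottom candidate $x_3$. With four or more candidates this gap-of-one-rank argument breaks down---one could have $\max_{\succ_j} f(P_{-j})$ equal to a middle candidate and $\max_{\succ_j} f(P)$ equal to a strictly lower middle candidate without violating either involvement condition---which is consistent with the paper's remark that optimist participation behaves differently for more candidates. I would therefore state the case analysis cleanly as above and note explicitly where three candidates is invoked, rather than reproduce the full $49$-entry table.
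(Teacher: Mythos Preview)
Your proof is correct and follows exactly the approach the paper indicates: the paper itself does not spell out a proof but merely remarks that the equivalence ``can [be] verif[ied] by case analysis on all possible combinations of sets $f(P_{-j}) \subseteq A$ and $f(P) \subseteq A$,'' which is precisely what you have done in a clean, organized way. Your observation about where the three-candidate assumption is genuinely used is a nice addition.
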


\citet{Pere95a} noted that in order to satisfy positive involvement, a Condorcet extension must refine what \citet{Holl24a} later called the defensible set.
The \emph{defensible set} of a profile $P$ is
\[
f_{\textup{defensible}}(P) = \{ x \in A : \text{for all $y \in A$, there is $z \in A$ with $m_{z, y} \ge m_{y, x}$} \}.
\]
We include a proof for completeness, noting that it is similar to an argument that was already used by \citet[Claim (3)]{Moul88b}.%
\footnote{\label{fn:defensible-set-converse-fails}The ``converse'' does not hold. Indeed, the strict Nanson rule and Baldwin's rule refine the defensible set (\Cref{thm:hasse}) but fail positive involvement even for three candidates: on the profile \href{https://voting.ml/?profile=2acb-2bac-2cba}{$2acb+2bac+2cba$}, both rules select $\{a, b, c\}$; after adding a $cab$ voter, we get the profile \href{https://voting.ml/?profile=2acb-2bac-1cab-2cba}{$2acb+2bac+cab+2cba$} where both rules select $\{a\}$, violating positive involvement.}

\begin{lemma}[\citealp{Holl24a}, Lemma 2(1); \citealp{Pere95a}, Lemma 3]
	\label{thm:pi-refines-defensible-set}
	Let $f$ be a Condorcet extension that satisfies positive involvement. Then $f$ is a refinement of $f_{\textup{defensible}}$, the defensible set.
\end{lemma}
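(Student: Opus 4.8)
The plan is to prove the contrapositive: if $x \notin f_{\textup{defensible}}(P)$, then $x \notin f(P)$. Unfolding the definition, $x \notin f_{\textup{defensible}}(P)$ means there is a candidate $y$ with $m_{z,y} < m_{y,x}$ for every $z \in A$. Taking $z = x$ gives $m_{y,x} > 0$ (so $y$ beats $x$), and taking $z = w$, where $w$ is the third candidate, gives $m_{w,y} < m_{y,x}$. The idea is to append to $P$ several new voters who all rank $x$ first, chosen so that in the enlarged profile $y$ becomes the Condorcet winner; Condorcet-consistency then forces $x$ out of the winning set, and positive involvement lets us delete the added voters one by one without ever letting $x$ back in.

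Concretely, I would add $k \ge 0$ copies of the order $x \succ y \succ w$ to obtain a profile $Q$. Each such voter decreases $m_{y,x}$ by one and increases $m_{y,w}$ by one, so in $Q$ we have $m_{y,x} = m_{y,x}(P) - k$ and $m_{y,w} = m_{y,w}(P) + k$. For $y$ to beat both $x$ and $w$ in $Q$ it suffices to pick an integer $k$ with $m_{w,y}(P) < k < m_{y,x}(P)$ and $k \ge 0$. Such a $k$ exists: the inequality $m_{w,y}(P) < m_{y,x}(P)$ holds by hypothesis, and since all margins of a profile share the parity of the number of voters, $m_{w,y}(P)$ and $m_{y,x}(P)$ differ by an even amount, leaving room for an integer strictly between them; nonnegativity is automatic because the upper endpoint $m_{y,x}(P)$ is positive. (When $m_{y,w}(P) > 0$ already, $y$ is the Condorcet winner of $P$ itself and one may take $k = 0$.)

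With $k$ so chosen, $y$ is the Condorcet winner of $Q$, so $f(Q) = \{y\}$ and in particular $x \notin f(Q)$. I then peel the added voters off one at a time. Each added voter ranks $x$ first, so the contrapositive of positive involvement applies: if $x$ is absent from the winning set of the current profile and we remove one of these $x$-topping voters, then $x$ is still absent from the winning set of the smaller profile. Iterating $k$ times returns us to $P$ and yields $x \notin f(P)$, as desired.

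The main obstacle is the counting step of the second paragraph: one must check that inserting $x$-first voters can simultaneously preserve $y$'s victory over $x$ and manufacture $y$'s victory over $w$, which is precisely where the parity of margins and the strict inequality $m_{w,y} < m_{y,x}$ are used. Everything else is a direct application of Condorcet-consistency and positive involvement.
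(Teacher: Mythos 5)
Your proposal is correct and follows essentially the same route as the paper's proof: it adds voters with ranking $x \succ y \succ w$ to make $y$ a Condorcet winner (using the same parity observation to guarantee a suitable number of added voters exists) and then applies positive involvement voter by voter, merely phrased as a contrapositive rather than as the paper's proof by contradiction. The only cosmetic difference is that the paper fixes the specific count $m_{y,x}(P)-1$ of added voters, whereas you argue that some admissible $k$ exists.
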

\begin{proof}
	Let $P$ be a profile. Let $x \notin f_{\text{defensible}}(P)$, so that there exists $y\in A$ with $m_{y,x}> m_{z,y}$ for all $z\in A$.
	Recall that all margins of a profile must have the same parity, so $m_{y,x}> m_{z,y}$ implies that we even have $m_{y,x} > m_{z,y}+1$ for all $z\in A$. 
	We need to show that $x \notin f(P)$. Assume for a contradiction that $x\in f(P)$. Add $m_{y,x}-1$ voters with rankings $x\succ y\succ \dots$ to $P$ to obtain the profile $P^*$. By positive involvement, $x$ has to remain chosen by $f$ after each addition of one voter, and thus, $x \in f(P^*)$. However, $y$ is the Condorcet winner in  $P^*$, because we have margins $m_{y,x}^* = 1$ and $m_{y,z}^* = -m_{z,y} - 1 + m_{y,x}>0 $. Thus, because $f$ is a Condorcet extension, we have $x \notin f(P)$, a contradiction.
\end{proof}

\subsection{Participation Requires Refining Maximin}

As noted above, maximin satisfies optimist participation for three candidates. 
We will now show that maximin is, in fact, pre-eminent among Condorcet extensions having this property: all such Condorcet extensions must be refinements of maximin under the mild additional assumption of homogeneity.%
\footnote{Our proof of \Cref{thm:maximin-refinement} actually uses the weaker ``inclusion homogeneity'' \citep[Definition 3.6]{HoPa23a} requiring only $f(P)\subseteq f(2P)$. In fact, we only need that for every profile $P$, there exists \emph{at least one} profile $2P$ that is a 2-fold copy of $P$ and such that $f(P) \subseteq f(2P)$. In the absence of anonymity, this is a weaker notion. Indeed, there are some non-anonymous refinements of maximin that satisfy participation and this weakened homogeneity, but not full homogeneity. An example is maximin with ties broken in favor of the most-preferred candidate of the voter $i \in N \subseteq N^* = \mathbb{N}$ with the smallest label. This satisfies homogeneity as long as the copy $2P$ has the same smallest-labelled voters as in $P$.}

\begin{theorem}\label{thm:maximin-refinement}
	Let $f$ be a homogeneous Condorcet extension that satisfies optimist participation. Then $f$ is a refinement of maximin.
\end{theorem}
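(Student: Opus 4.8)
The plan is to prove the contrapositive: if $x \in f(P)$ but $x \notin f_{\text{maximin}}(P)$, I derive a contradiction. Suppose $x$ is chosen by $f$ on some profile $P$ but is not a maximin winner. Then there is a candidate, say $z$, whose worst margin strictly exceeds the worst margin of $x$; in particular $x$ loses more badly to one of its opponents than $z$ does to any of its opponents. Since there are only three candidates and $x$ is not maximin-optimal, I would first pin down the structure of the margin graph around $x$: the worst margin of $x$ is achieved against some candidate $w$, and $m_{w,x}$ is strictly larger than $z$'s worst loss. The goal is to build an auxiliary electorate that can be added to (some copies of) $P$ so that, via optimist participation applied voter by voter, $x$ is forced to remain selected, while the combined profile has a Condorcet winner different from $x$ — contradicting Condorcet-consistency.

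The key technical device is \Cref{thm:pi-refines-defensible-set}: since optimist participation implies positive involvement (indeed optimist participation is equivalent to positive involvement together with singleton negative involvement by \Cref{thm:optimist-equivalence}), any $f$ of the hypothesized form refines the defensible set. So I may assume $x \in f_{\text{defensible}}(P)$ already, which constrains how badly $x$ can lose: for the opponent $w$ realizing $x$'s worst margin $m_{w,x}$, there must be a $y$ with $m_{y,w} \ge m_{w,x}$. Combined with $x \notin f_{\text{maximin}}(P)$, this forces a fairly rigid ordinal margin graph on $\{a,b,c\}$, which I would enumerate using the twelve cases of \Cref{tbl:rules-ordinal}. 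In every such case the defensible set is a proper subset, and comparing it against the maximin row of the table shows exactly which candidates can be simultaneously defensible and non-maximin — this is where the three-candidate restriction does the real work.

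Having fixed the margin graph, I would then use homogeneity to pass to a large even copy $2^k P$ so that all margins are comfortably large and of a convenient parity, and attach a small ``booster'' electorate $Q$ consisting of voters who rank $x$ first. Adding these voters one at a time, positive involvement keeps $x \in f(P + Q)$ at each step (since $x$ is top-ranked by each new voter and stays winning), so $x \in f(P+Q)$ for the full $Q$. The booster $Q$ is chosen precisely so that in $P + Q$ some \emph{other} candidate becomes the Condorcet winner: I want to increase $x$'s margin against the candidate it was losing to just enough to flip that edge, while the structure guarantees a genuine Condorcet winner emerges (the candidate $x$ was beating, or $z$, depending on the case). Then Condorcet-consistency forces $f(P+Q)$ to be that single candidate $\neq x$, the contradiction. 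The delicate point — and the main obstacle — is engineering $Q$ so that it simultaneously (i) consists only of voters topping $x$ (so positive involvement fires), and (ii) produces a Condorcet winner distinct from $x$; the parity constraints on margins (all margins share the parity of $|N|$) mean the edge I need to flip may require $Q$ to be larger than naively expected, and I must verify in each ordinal case that such a $Q$ exists and that homogeneity lets me arrange the required parity. I expect the whole argument to reduce, after these reductions, to checking a handful of explicit small configurations rather than a single generic computation.
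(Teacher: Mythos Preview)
Your reduction via \Cref{thm:pi-refines-defensible-set} is exactly right, and it does dispose of all ordinal margin graphs in which the defensible set equals the maximin set. But your ``booster'' strategy cannot finish the job in the one remaining case, and this is a genuine gap rather than a detail to be filled in. In the cycle with $0 \le m_{c,a} < m_{b,c} \le m_{a,b}$ (Graphs G--J in \Cref{tbl:rules-ordinal}), the maximin winner is $\{a\}$ but the defensible set is $\{a,c\}$, so you must still rule out $c \in f(P)$. Your plan is to add voters who rank $c$ first and thereby create a Condorcet winner different from $c$. But every such voter increases $m_{c,a}$ and decreases $m_{b,c}$, so $c$'s position only improves; a short check shows that neither $c \succ a \succ b$ nor $c \succ b \succ a$ voters, in any quantity, can make $a$ or $b$ a Condorcet winner here. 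This is not an accident: adding top-$x$ voters to force a different Condorcet winner is precisely the argument that establishes $x \notin f_{\text{defensible}}(P)$, and $c$ \emph{is} defensible in this case. Positive involvement alone is simply too weak to exclude $c$.

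The paper closes this gap by using the \emph{other} half of optimist participation, namely singleton negative involvement. Starting from $2P$ (this is where homogeneity is actually needed, to make the margin gaps large enough), one adds $j$ voters with preferences $b \succ c \succ a$, i.e., voters who rank $c$ second and $a$ last. This transforms the ordinal margin graph into the configuration $m_{c,a}^* < m_{a,b}^* < m_{b,c}^*$, which falls under a case already handled, so $f(P^*) = \{a\}$. Optimist participation then yields the contradiction: adding voters whose bottom candidate is $a$ cannot move the outcome from a set containing $c$ to the singleton $\{a\}$. So the missing idea in your plan is to exploit the negative-involvement side of optimist participation by adding voters who rank the target candidate second rather than first.
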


Our proof will use \Cref{thm:pi-refines-defensible-set} to deduce that $f$ must refine the defensible set. It will then use the extra strength of optimist participation compared to positive involvement (in particular, singleton negative involvement, see \Cref{thm:optimist-equivalence}) to narrow down the space of candidates that $f$ can select to conclude that $f$ must refine maximin.

\begin{proof}
	On profiles that induce a Condorcet winner, $f$ refines maximin because it is a Condorcet extension.
	On profiles whose margin graph is a cycle with three equal non-negative margins, maximin selects all three candidates, and so $f$ trivially refines maximin.
	All other profiles induce margin graphs that match, up to renaming the candidates, one of the four cases depicted in \Cref{fig:MaximinRefinementFigure}.%
	\footnote{Comparing to the graphs shown in \Cref{tbl:rules-ordinal}, we see that Case 1 corresponds to Graphs K and L; Case 2 corresponds to Graphs E and F; Case 3 (left) corresponds to Graphs G and H; Case 3 (right) corresponds to Graphs I and J; Case 4 corresponds to Graphs C and D.}

	\begin{figure}[t]
		\centering
		\includegraphics[width=\textwidth]{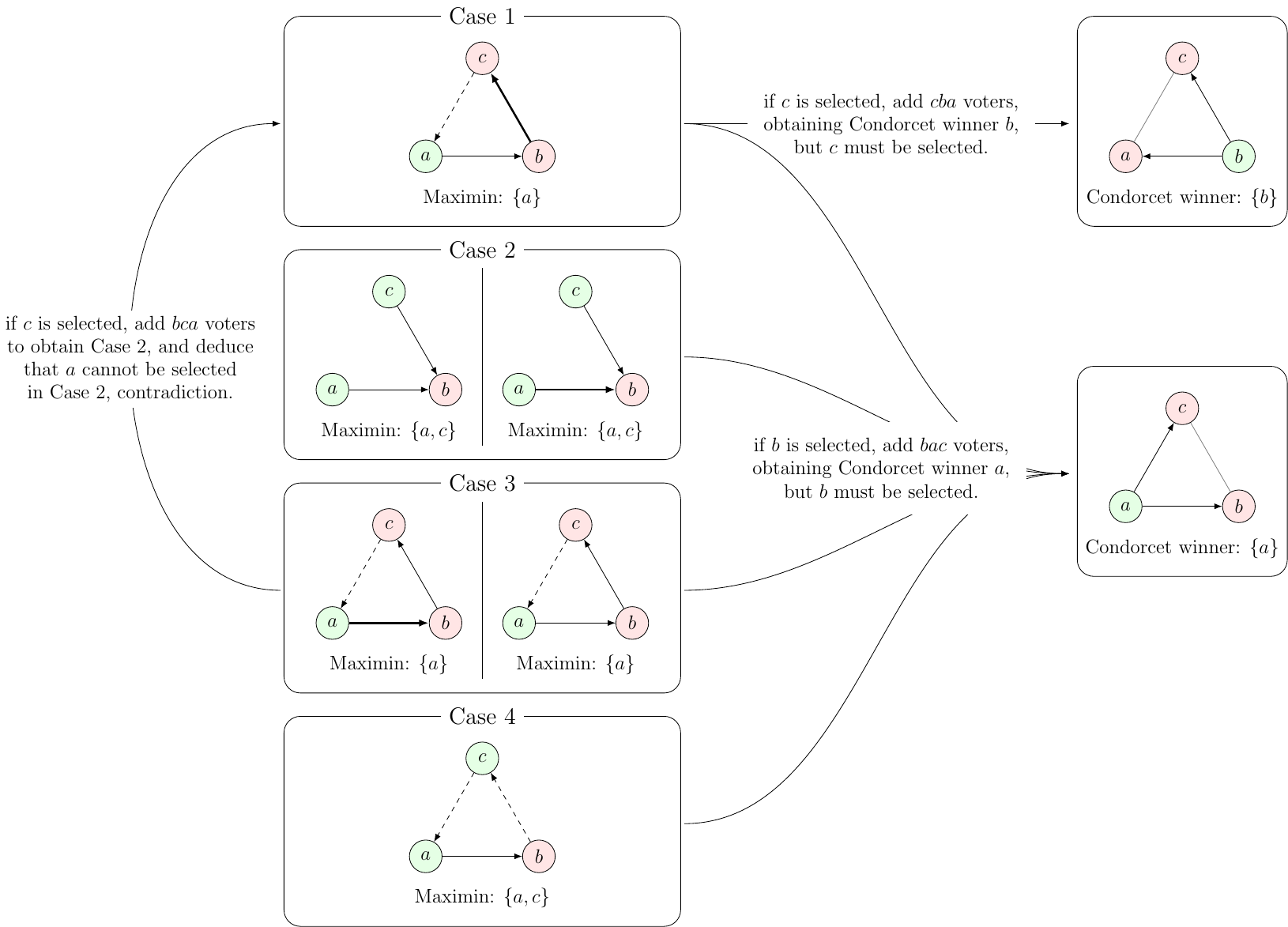}
		\caption{A proof sketch of \Cref{thm:maximin-refinement}. In each panel, arrows from \(x\) to \(y\) indicate that \(x\) has a non-negative majority margin against \(y\): bold arrows represent the largest margins, normal arrows smaller but strictly positive margins, and dashed arrows the smallest margins, which may be zero. Between panels, the arrow from $A$ to $B$ explains why the choice of a candidate in panel $A$ would contradict the choices established in panel $B$.
		}%
		\label{fig:MaximinRefinementFigure}
	\end{figure}

	The defensible set coincides with the maximin winners in Cases 1, 2, and 4. Because $f$ satisfies positive involvement (by \Cref{thm:optimist-equivalence}), it refines the defensible set by \Cref{thm:pi-refines-defensible-set}, and therefore $f$ refines maximin in Cases 1, 2, and 4. (\Cref{fig:MaximinRefinementFigure} makes the argument of \Cref{thm:pi-refines-defensible-set} explicit by explaining what voters to add to arrive at a Condorcet winner and a contradiction to participation.)
	
	It remains to consider Case 3, which concerns profiles $P$ that, up to renaming candidates, induce a cycle with $0\le m_{c,a}< m_{b,c} \le m_{a,b}$. In $P$, the unique maximin winner is $a$. Thus, our task is to show that $f(P) = \{a\}$. We will show this by connecting this case to Case 1. Because $f_{\textup{defensible}}(P) = \{a,c\}$, we get that $b\notin f(P)$ from \Cref{thm:pi-refines-defensible-set}. Assume for a contradiction that $c\in f(P)$. Let $P' = 2P$ be a profile consisting of two copies of $P$. By homogeneity, we obtain $c \in f(P')$.
	Let $j= (m'_{a,b}-m'_{b,c})/2 +1$.
	Adding $j$ voters to $P'$ with preferences $b\succ c\succ a$ yields a profile $P^*$ with margins 
	\[ m^*_{c,a}= m'_{c,a}+j, \qquad m_{a,b}^* =m'_{a,b}-j, \quad \text{and} \quad m_{b,c}^*= m'_{b,c} + j. \] 
	Crucially, we have $m'_{c,a} + 2 <  m'_{b,c}$ and by choice of $j$ we have $m_{c,a}^* < m_{a,b}^* < m_{b,c}^*$.
	Thus, $P^*$ is a profile matching Case 1. Following our prior analysis, this implies $f(P^*) =\{a\}$. This is a contradiction to optimist participation because $\{a\}$ is a worse outcome for voters with preferences $b\succ c\succ a$ than an outcome containing $c$ as a winner. (Note this step uses just singleton negative involvement.)
\end{proof}

\Cref{thm:maximin-refinement} can be reinterpreted for resolute social choice functions, as optimist and resolute participation are equivalent for resolute social choice functions. Thus, we get the following corollary for resolute Condorcet extensions.

\begin{corollary}
	Let $f$ be a homogeneous and resolute Condorcet extension that satisfies resolute participation. Then $f$ is a refinement of maximin.
\end{corollary}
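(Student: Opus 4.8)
The plan is to show this Corollary follows essentially immediately from \Cref{thm:maximin-refinement}, with the only substantive work being to establish that, for resolute social choice functions, resolute participation and optimist participation are the same condition. First I would observe that if $f$ is resolute, then for every profile $P$ the choice set $f(P)$ is a singleton, say $f(P) = \{x\}$. In this case the optimistic set-comparison trivializes: for any singleton $\{x\}$ we have $\max_{\succ_j} \{x\} = x$. Hence the optimist participation requirement $\max_{\succ_j} f(P) \succeq_j \max_{\succ_j} f(P_{-j})$ becomes exactly $x \succeq_j y$, where $f(P) = \{x\}$ and $f(P_{-j}) = \{y\}$, which is precisely the statement of resolute participation. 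So the two axioms coincide verbatim on resolute social choice functions.

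With that equivalence in hand, I would argue as follows. Let $f$ be a homogeneous and resolute Condorcet extension satisfying resolute participation. By the equivalence just noted, $f$ satisfies optimist participation. Since $f$ is also a homogeneous Condorcet extension, \Cref{thm:maximin-refinement} applies directly and yields that $f$ is a refinement of maximin. This is the entire content of the corollary, so no further case analysis or profile construction is needed.

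The only point requiring care — and the step I would flag as the main (mild) obstacle — is making the definitional reduction airtight. One must check both directions of the biconditional on resolute functions and confirm that resoluteness guarantees $f(P_{-j})$ is itself a singleton, so that $\max_{\succ_j} f(P_{-j})$ is well-defined and equals the unique element $y$. Since $P_{-j}$ is a nonempty profile whenever $|N| \geq 2$ (and the case $|N|=1$ is vacuous, as there is no profile to abstain from in a way that changes the electorate to something requiring comparison), resoluteness indeed forces $f(P_{-j})$ to be a singleton, so the comparison is always between two single candidates. I would state this reduction as a one-sentence remark and then invoke \Cref{thm:maximin-refinement} to conclude. No genuinely new ideas beyond the theorem are required; the corollary is a direct specialization via the coincidence of the two participation notions in the resolute setting.
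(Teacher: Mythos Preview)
Your proposal is correct and matches the paper's own reasoning: the paper simply notes that optimist and resolute participation are equivalent for resolute social choice functions and then invokes \Cref{thm:maximin-refinement}, exactly as you do. Your added remark about why the singleton reduction is well-defined is fine but more than the paper spells out.
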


Inspecting the rules shown in the Hasse diagram of \Cref{fig:hasse}, we see that \Cref{thm:maximin-refinement} immediately implies that many rules fail optimist participation.
However, there are some important rules that do satisfy optimist participation (and also the strengthened version described in \Cref{fn:fishburn}): these are maximin itself (and the many other rules equivalent to it for three candidates), stable voting, Nanson's rule, and leximin. In addition, if we use a fixed tie-breaking order to make any of these rules resolute, they satisfy (optimist/resolute) participation. On the other hand, the strict Nanson rule fails optimist participation (and even positive involvement, see \Cref{fn:defensible-set-converse-fails}) despite being an anonymous and neutral maximin refinement.

\Cref{thm:maximin-refinement} shows that every social choice function satisfying certain axioms must refine maximin. A theorem with the same conclusion was recently obtained by \citet{HoPa23a}. They introduce the axioms of \emph{weak positive responsiveness}, which requires that if $z \in f(P)$ and if $P'$ is obtained from $P$ through some voter changing her preferences from $x \succ y \succ z$ to $z \succ x \succ y$, then $f(P') = \{z\}$, and the axiom of \emph{block preservation} which requires that if $Q$ is a profile in which each of the $3! = 6$ possible preferences occurs exactly once, then $f(P) \subseteq f(P + Q)$.
\begin{theorem}[\citealp{HoPa23a}, Theorem 3.8 and Proposition 3.10]
	\label{thm:HoPa23a}
	Let $f$ be an anonymous, neutral, and homogeneous Condorcet extension that satisfies positive involvement, weak positive responsiveness, and block preservation. Then $f$ is a refinement of maximin.
\end{theorem}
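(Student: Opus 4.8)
The plan is to follow the case analysis already used in the proof of \Cref{thm:maximin-refinement}, reusing everything in it except the single place where that proof invoked singleton negative involvement (which is unavailable here). By neutrality it suffices to verify the refinement on profiles with a Condorcet winner (where $f$ agrees with maximin since $f$ is a Condorcet extension), on profiles whose margin graph is a symmetric three-cycle (where maximin selects all candidates), and on the four canonical cases of \Cref{fig:MaximinRefinementFigure}. Because $f$ satisfies positive involvement, \Cref{thm:pi-refines-defensible-set} gives that $f$ refines the defensible set, and in Cases~1, 2, and~4 the defensible set already coincides with the maximin winners, so $f$ refines maximin there. The entire difficulty is Case~3: profiles $P$ that, up to renaming, induce a cycle with $0 \le m_{c,a} < m_{b,c} \le m_{a,b}$, where maximin selects only $a$ but the defensible set is $\{a,c\}$. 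The task is to show $f(P)=\{a\}$, and the new ingredient is to do this using weak positive responsiveness (WPR) and block preservation (BP) together with homogeneity, in place of singleton negative involvement.

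My approach to Case~3 is to exhibit $P$ as the top of a chain of single-voter improvements of candidate $a$ that starts at a profile where $a$ is already known to win, and then let WPR propagate the singleton outcome $\{a\}$ down the chain. The key observation is that changing one voter from $b\succ c\succ a$ to $a\succ b\succ c$ shifts the margin triple $(m_{a,b},m_{b,c},m_{c,a})$ by $(+2,0,-2)$; this is exactly the WPR operation with improved candidate $a$. Running this backwards, I would first pass to a scaled copy $tP$ (legitimate by homogeneity, choosing $t$ a multiple of $4$ so that the step counts below are integers and parities are preserved), pad it with $k$ blocks to obtain $tP+kQ$ with $k$ large, and then de-improve $a$ in exactly $s=\min\{\,t(m_{a,b}-m_{c,a})/4,\ t(m_{b,c}-m_{c,a})/2\,\}$ of its $a\succ b\succ c$ voters (the blocks guarantee at least $s$ of them) to reach a base profile $R_0$. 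A direct computation of worst-case margins shows that $s$ is precisely the largest number of de-improvements of $a$ for which $a$ still remains a maximin winner, and that at $s$ the margins of $R_0$ become those of either a symmetric three-cycle or a Case~4 graph (one strictly largest edge and two equal smaller edges), in which $a$ is a maximin co-winner. Both are already-handled cases, so $a\in f(R_0)$.

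To climb back from $R_0$ to $tP+kQ$ I would apply WPR $s$ times, each time improving one of the $s$ voters I de-improved: since $a$ is a winner before each step, WPR forces $f$ to equal $\{a\}$ after it, and since improving $b\succ c\succ a$ to $a\succ b\succ c$ is the inverse of the de-improvement, after $s$ steps the chain recovers exactly $tP+kQ$, giving $f(tP+kQ)=\{a\}$. Finally, block preservation in the forward direction gives $f(tP)\subseteq f(tP+Q)\subseteq\cdots\subseteq f(tP+kQ)=\{a\}$, so $f(tP)=\{a\}$, and homogeneity yields $f(P)=\{a\}$, ruling out both $b$ and $c$ in Case~3. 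No circularity arises, since Case~3 relies only on the symmetric-cycle case and Cases~1, 2, and~4, all of which are settled from positive involvement alone.

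The step I expect to be the main obstacle is the design of this chain, because WPR is strictly one-directional: improving $c$ only strengthens $c$ and can never yield a contradiction, so the whole argument must be engineered around improving $a$ starting from a profile that is \emph{weaker} for $a$ than $P$ itself. The two delicate points are (i) verifying that the maximal de-improvement of $a$ lands in an already-settled case in which $a$ is a winner, rather than in an as-yet-unhandled cyclic case or in a case where $a$ has dropped out of the maximin set entirely — this is exactly what the $\min$ in the definition of $s$ controls, and it is why both the symmetric-cycle base and the Case~4 base must be available depending on whether $m_{b,c}$ lies above or below $\tfrac{1}{2}(m_{a,b}+m_{c,a})$; and (ii) arranging integrality and parity of the step count $s$ alongside a guaranteed supply of voters of type $b\succ c\succ a$, both of which are obtained cheaply by scaling through homogeneity and padding through block preservation.
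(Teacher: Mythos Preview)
First, a framing point: the paper does not give its own proof of this statement. \Cref{thm:HoPa23a} is quoted from \citet{HoPa23a} purely for comparison with \Cref{thm:maximin-refinement}, so there is no in-paper argument to benchmark against. That said, your attempt has a genuine gap that would need to be closed before the argument goes through.

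The problem is the sentence ``Both are already-handled cases, so $a\in f(R_0)$.'' What has been ``already handled'' about Cases~1, 2 and~4 is only that $f$ \emph{refines} maximin there, i.e.\ $f(R_0)\subseteq f_{\text{maximin}}(R_0)$. It does \emph{not} follow that every maximin winner lies in $f(R_0)$. Concretely, in your sub-case where $R_0$ lands on a Case~4 graph with $m'_{a,b}>m'_{b,c}=m'_{c,a}$, you only know $f(R_0)\subseteq\{a,c\}$; nothing rules out $f(R_0)=\{c\}$. Symmetrically, in the other sub-case you only know $f(R_0)\subseteq\{a,b\}$ and cannot exclude $f(R_0)=\{b\}$. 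Either of these would kill the WPR chain at its very first step, since WPR requires $a\in f(R_0)$ before you may improve $a$. The only base you can safely use is the symmetric three-cycle (where anonymity and neutrality force $f(R_0)=\{a,b,c\}$), but your de-improvement of $a$ has shift $(-2,0,+2)$ and therefore leaves $m_{b,c}$ untouched; a symmetric cycle is reachable only under the non-generic constraint $m_{a,b}+m_{c,a}=2m_{b,c}$, which is exactly why you were forced into the Case~4 bases in the first place.

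This is not a cosmetic issue. If, say, $f(R_0)=\{c\}$ in the Case~4 base, then applying WPR to improve $c$ only drives the profile toward configurations where $c$ is the unique maximin winner or even the Condorcet winner, so no contradiction arises; and applying WPR to improve $a$ is simply illegal. In other words, the axioms you have invoked so far (positive involvement, homogeneity, block preservation, WPR) do not by themselves pin down which of the two maximin winners $f$ picks on a Case~4 graph, and your Case~3 argument needs precisely that information. You would need an additional step---presumably the one where the Holliday--Pacuit proof does real work---to force $a\in f(R_0)$ on Case~4 graphs before the WPR ladder can be started. The rest of your scaffolding (scaling via homogeneity, padding via block preservation to guarantee a supply of $abc$ voters, the direction $f(tP)\subseteq f(tP+kQ)$) is sound.
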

Note that our \Cref{thm:maximin-refinement} does not require the axioms of anonymity, neutrality, weak positive responsiveness, and block preservation but instead uses singleton negative involvement (see \Cref{thm:optimist-equivalence}).

In the following sections, we will characterize several rules as the unique maximin refinements satisfying certain axioms. Each of these results can be combined with either the axioms of \Cref{thm:maximin-refinement} (or those of \Cref{thm:HoPa23a}) to obtain axiomatic characterizations within the class of all social choice functions.

\subsection{Characterization of Leximin}
\label{sec:leximin}

Leximin refines maximin and is remarkably resolute because it only returns a tie when two candidates are ``locally indistinguishable'' within the margin graph (see \Cref{fn:leximin definition}). This allows it to satisfy a demanding monotonicity-style axiom called positive responsiveness. We say that a profile $P'$ is obtained from a profile $P$ by \emph{improving $x$ relative to $y$} if the ranking of all candidates stays fixed except that one or more voters who rank $y$ immediately above $x$ in $P$ instead rank $x$ immediately above $y$ in $P'$. A social choice function $f$ satisfies \emph{positive responsiveness} (\citealp{May52a}; \citealp{Barb77b}, Definition 7) if whenever $x \in f(P)$ and $P'$ is obtained from $P$ by improving $x \in f(P)$ relative to $y \in A$, we have $f(P') = \{x\}$. We show that leximin is the unique maximin refinement that satisfies this property among maximin refinements that are neutral and that are pairwise. A social choice function $f$ is \emph{pairwise} if it only depends on the margin graph, i.e., for all $P$ and $P'$ such that $m_{x,y}(P) = m_{x,y}(P')$ for all $x,y\in A$, we have $f(P)=f(P')$.

\begin{lemma}\label{lem:leximinChar}
	Leximin is the only neutral and pairwise refinement of maximin that satisfies positive responsiveness.
\end{lemma}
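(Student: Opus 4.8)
The plan is to show the two directions: first that leximin itself is neutral, pairwise, and satisfies positive responsiveness, and then that these three properties (together with refining maximin) pin it down uniquely. The first direction should be routine. Leximin is manifestly pairwise and neutral since it is defined purely through the sorted vectors of majority margins $m^x$, and these transform correctly under relabeling candidates. For positive responsiveness, I would argue that improving $x$ relative to $y$ strictly increases $m_{x,y}$ by $2$ (since one voter flips from $y \succ x$ to $x \succ y$, each such flip changes the margin by $2$) and leaves $m_{x,z}$ unchanged for the third candidate $z$, while decreasing $m_{y,x}$. By the footnote's characterization (\Cref{fn:leximin definition}), increasing one coordinate of the sorted margin vector $m^x$ can only improve $x$'s standing in the lexicographic order $\triangleright$; I would check the small number of cases to confirm that if $x$ was already a leximin winner, it becomes the \emph{unique} winner, using that for three candidates the tie in leximin requires two candidates to have \emph{identical} sorted margin vectors.

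For the uniqueness direction, let $f$ be any neutral, pairwise refinement of maximin satisfying positive responsiveness, and let $P$ be a profile; I want $f(P) = f_{\text{leximin}}(P)$. Since $f$ refines maximin, $f(P) \subseteq f_{\text{maximin}}(P)$, and since leximin also refines maximin, it suffices to resolve the cases where maximin is \emph{not} resolute. Using that $f$ is pairwise and neutral, I would reduce to the finitely many ordinal margin graphs without a Condorcet winner — essentially the twelve cases catalogued in \Cref{tbl:rules-ordinal} — and in each case determine $f$'s output. The Condorcet-cycle case (Graphs A, B) and the cases where leximin and maximin already agree are immediate. The crux is the cases where maximin returns a two-element tie but leximin breaks it, namely Graphs C and F (where leximin picks the single Borda-superior candidate among the maximin winners).

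The main obstacle, and the heart of the argument, will be forcing $f$ to break these ties exactly as leximin does, using positive responsiveness. The strategy is: given a profile $P$ matching such a case, where maximin ties two candidates $x$ and $y$ but leximin strictly prefers $x$, I would construct a related profile $\tilde P$ on which $x$ and $y$ are \emph{symmetric} (so that by neutrality $f(\tilde P) \supseteq \{x, y\}$ and hence contains both), and then recover $P$ from $\tilde P$ by improving $x$ relative to some candidate. Positive responsiveness then forces $f(P) = \{x\}$, matching leximin. The delicate points will be (i) checking that the requisite symmetric profile $\tilde P$ exists with the correct margins — here the same-parity constraint on margins and McGarvey's theorem guarantee realizability — and (ii) verifying that the single improvement step taking $\tilde P$ to $P$ is a legal ``improving $x$ relative to $y$'' operation (i.e., there is a voter ranking $y$ immediately above $x$ to flip) and lands exactly on the target margin graph. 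I would also need to confirm that $f(P) \neq \{y\}$ and $f(P) \neq \{x,y\}$ are both excluded, which the single application of positive responsiveness delivers at once by yielding the singleton $\{x\}$.
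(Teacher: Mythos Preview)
Your proposal is correct and follows essentially the same route as the paper: show leximin satisfies the axioms via the lexicographic order on sorted margin vectors, then for uniqueness reduce (via pairwiseness) to the handful of ordinal margin graphs on which maximin is irresolute, dispatch the fully symmetric graphs by neutrality, and for the remaining graphs construct a symmetric auxiliary profile $\tilde P$ and reach $P$ by improving the leximin winner, invoking positive responsiveness. Two small points to tighten: you list only Graphs~C and~F as the ``crux'' cases, but Graph~D ($m_{a,b} > m_{b,c} = m_{c,a} = 0$) is also a two-way maximin tie broken by leximin and must be handled (the paper treats C and D together with the same construction, reducing to Graph~A or~B); and your claim that Graph~E is ``immediate'' needs both neutrality \emph{and} the maximin-refinement hypothesis (to exclude $b$ before the $a\leftrightarrow c$ symmetry forces $f(\tilde P)=\{a,c\}$), whereas for Graphs~A and~B the full three-way symmetry alone suffices.
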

\begin{proof}
	We first show that leximin satisfies all axioms.
	It is neutral and pairwise by definition. To see that it satisfies positive responsiveness, note that if $P'$ is obtained from $P$ by moving up $x \in f_{\text{leximin}}(P)$ then the margins of $x$ only improve, i.e., $m^x(P') \triangleright m^x(P)$, while for any $z\neq x$ the margins can only become worse, i.e., $m^z (P)\trianglerighteq m^z(P')$. (See \Cref{fn:leximin definition} for the definition of the relation $\triangleright$.) Let $z\in A\setminus\{x\}$. Then because $x$ is selected by leximin, we have $m^x(P)\trianglerighteq m^z (P)$. Thus, $m^x(P') \triangleright m^x(P)\trianglerighteq m^z (P) \trianglerighteq m^z(P')$. Hence $z \notin f_{\text{leximin}}(P')$ and so $f_{\text{leximin}}(P') =\{x\}$ as desired.

	For uniqueness, let $f$ be a maximin refinement satisfying neutrality, pairwiseness, and positive responsiveness. By pairwiseness, note that $f$ is also anonymous. We need to show that $f = f_{\text{leximin}}$. By pairwiseness, it suffices to consider each possible margin graph, find some profile $P$ that induces it, and show that $f(P) = f_{\text{leximin}}(P)$.
	So consider any margin graph. If maximin selects a unique winner when given a profile inducing that margin graph, we are done since then all maximin refinements select the same choice set. All other margin graphs correspond to Graphs A to F in \Cref{tbl:rules-ordinal}.

	For Graph A (the Condorcet cycle with three edges each with positive weight $t$) take $P = t \cdot (abc + bca + cab)$ and for Graph B (all pairs of candidates are majority tied) take the profile $P$ consisting of one copy of each of the six preference orders. Anonymity and neutrality imply $f(P) = \{a,b,c\}$ which equals $f_{\text{leximin}}(P)$.
	
	For Graphs C and D, we have a majority graph with $0 \le m_{c,a} = m_{b,c} < m_{a,b}$. Let $P$ be a profile inducing this graph (which exists by McGarvey's theorem). Let $t = \frac12 (m_{a,b} - m_{b,c}) > 0$. We may assume that there are at least $t$ many voters with preferences  $c\succ a\succ b$ in $P$ (otherwise we can add those voters to $P$ and also add $t$ voters with opposite preferences $b \succ a \succ c$ which leads to a profile inducing the same margin graph). Let $P'$ be the profile obtained from $P$ by selecting $t$ voters with preferences $c \succ a\succ b$ and replace their preference by $c\succ b\succ a$. Then on $P'$, all majority margins are equal, i.e., $0 \le m_{c,a}(P') =m_{a,b}(P') = m_{b,c}(P')$. This corresponds to Graphs A or B, and thus $f(P')=\{a,b,c\}$. Note that $P$ can be viewed as being obtained from $P'$ by improving $a \in f(P')$ relative to $b$. Thus, by positive responsiveness, we obtain $f(P)= \{a\} = f_{\text{leximin}}(P)$.
	
	For Graph E, where the two edges both have weight $2t$, consider the profile $P = t \cdot (acb + cab)$ inducing it.
	Note that $b\notin f_{\text{maximin}}(P)$ and thus $b \notin f(P)$.
	By anonymity and neutrality, we obtain $f(P)=\{a,c\}= f_{\text{leximin}}(P)$. 
	
	For Graph F,  we have $m_{a,b} > m_{c,b} > m_{c,a} = 0$. Let $t = \frac12(m_{a,b} - m_{c,b}) > 0$ and let $P$ be a profile inducing Graph F with at least $t$ voters having preference $a\succ b\succ c$. By changing $t$ of these votes to $b\succ a\succ c$, we reach a profile $P'$ on which the new majority margins satisfy $m_{a,b}(P') = m_{c,b}(P')$, which is of type Graph E, and thus $f(P') = \{a,c\}$. Note that $P$ can be viewed as being obtained from $P'$ by improving $a \in f(P')$ relative to $b$. Thus, by positive responsiveness, we obtain $f(P) = \{a\} = f_{\text{leximin}}(P)$, concluding the proof.
\end{proof}

As we explained, combined with \Cref{thm:maximin-refinement}, we obtain an axiomatic characterization.

\begin{corollary}
	Leximin is the only homogeneous, neutral, and pairwise Condorcet extension that satisfies optimist participation and positive responsiveness.
\end{corollary}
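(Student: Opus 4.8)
The plan is to prove the final corollary by combining \Cref{lem:leximinChar} with \Cref{thm:maximin-refinement}, treating each as a black box. The statement asks us to characterize leximin among \emph{all} social choice functions satisfying homogeneity, neutrality, pairwiseness, optimist participation, and positive responsiveness, whereas \Cref{lem:leximinChar} only characterizes leximin \emph{among refinements of maximin}. So the single substantive gluing step is to upgrade the class of rules under consideration: I must first argue that any rule satisfying the corollary's hypotheses is in fact a refinement of maximin, and only then invoke the lemma.

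Concretely, I would proceed as follows. First I would verify that leximin itself satisfies all five listed axioms: neutrality and pairwiseness hold by definition; homogeneity holds because leximin depends only on the margin graph and multiplying a profile by a constant scales all margins by that constant without changing their relative order (so the lexicographic comparison is unchanged); optimist participation was established for maximin and inherited by leximin in the discussion following \Cref{thm:maximin-refinement} (leximin is explicitly listed there as satisfying optimist participation); and positive responsiveness was shown in the forward direction of \Cref{lem:leximinChar}. This handles the ``leximin qualifies'' direction.

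For uniqueness, let $f$ be any homogeneous, neutral, pairwise Condorcet extension satisfying optimist participation and positive responsiveness. The key move is that $f$ is homogeneous and satisfies optimist participation, so \Cref{thm:maximin-refinement} applies directly and tells us that $f$ is a refinement of maximin. (Condorcet-consistency of $f$ is part of the hypotheses, as required by that theorem.) Now $f$ is a neutral and pairwise refinement of maximin that satisfies positive responsiveness, which is exactly the hypothesis of \Cref{lem:leximinChar}; hence $f = f_{\text{leximin}}$. This completes the argument.

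The only real care needed is bookkeeping about which axioms feed which result: \Cref{thm:maximin-refinement} consumes homogeneity, Condorcet-consistency, and optimist participation to yield ``refines maximin,'' and then \Cref{lem:leximinChar} consumes neutrality, pairwiseness, and positive responsiveness (plus the refinement conclusion just obtained) to pin down leximin. I do not expect any genuine obstacle here, since both ingredients are already proved in the excerpt; the proof is essentially a one-line composition of the two results, and the main thing to get right is stating clearly that the corollary's hypotheses are precisely strong enough to trigger \Cref{thm:maximin-refinement} first and \Cref{lem:leximinChar} second.
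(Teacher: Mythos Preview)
Your proposal is correct and matches the paper's approach exactly: the paper states that the corollary follows by combining \Cref{thm:maximin-refinement} (which uses homogeneity, Condorcet-consistency, and optimist participation to force $f$ to refine maximin) with \Cref{lem:leximinChar} (which uses neutrality, pairwiseness, and positive responsiveness to pin down leximin among maximin refinements). Your bookkeeping of which axioms feed which result is precisely right, and the verification that leximin itself satisfies all five axioms is handled just as you describe.
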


\subsection{Characterization of Nanson's rule}

Leximin is very decisive and only reports ties in very specific situations. However, one may argue that leximin neglects to select some candidates that deserve to win. In particular, it may fail to select a candidate even if it is ``almost'' a Condorcet winner in the sense that, in pairwise majority comparisons, it is never beaten but does beat some other candidates. For example, in Graph F of \Cref{tbl:rules-ordinal}, candidate $c$ meets this description but is not selected by leximin, which selects only $\{a\}$.  In contrast, Nanson's rule selects $\{a,c\}$. Note that Graph F is the only case where leximin and Nanson's rules differ.
Graphs D and E also have intermediate Condorcet winners ($\{a\}$ and $\{a,c\}$, respectively).

The concept of candidates that are almost Condorcet winners in this sense was first considered by \citet{BuWe79a} and \citet{Gehr83a}. They were later termed ``intermediate Condorcet winners'' by \citet{BaBo24a}. 
Formally, a candidate $x$ is an \emph{intermediate Condorcet winner} if $m_{x,y}\ge 0$ for all $y\in A$, and at least one of these inequalities is strict. \citet{BaBo24a} say that $f$ is a \emph{strong Condorcet extension} if $f(P)$ is equal to the set of intermediate Condorcet winners whenever this set is non-empty. Note that whenever a candidate is a Condorcet winner, it is also the unique intermediate Condorcet winner. In Graph F in \Cref{tbl:rules-ordinal}, $a$ and $c$ are intermediate Condorcet winners, and thus a strong Condorcet extension must select both of them. Since leximin only selects $a$ it is not a strong Condorcet extension (the same is true for stable voting). However, Nanson's rule is a strong Condorcet extension.%
\footnote{For more than 3 candidates, Nanson's rule is not a strong Condorcet extension. In the profile \href{https://voting.ml/?profile=2ABCD-4ADBC-1BCDA-1CDAB-4CDBA}{$2abcd+4adbc+1bcda+1cdab+4cdba$}, $a$ is the unique intermediate Condorcet winner but Nanson selects $\{c\}$.}

Given our previous result about leximin, Nanson's rule must violate positive responsiveness. However, it does satisfy a slight weakening that is concerned with voters who rank two winning candidates next to each other; if such a voter flips those candidates, then the tie should be broken in favor of the strengthened candidate.
Thus, a social choice function $f$ satisfies \emph{tie-break positive responsiveness} if whenever $x, y \in f(P)$ and $P'$ is obtained from $P$ by improving $x$ relative to $y$, then $f(P') = \{x\}$. This is a weaker property than positive responsiveness.

\begin{lemma}
	Nanson's rule is the only neutral and pairwise refinement of maximin that satisfies tie-break positive responsiveness and is a strong Condorcet extension.
\end{lemma}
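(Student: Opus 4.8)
The plan is to mirror the structure of the leximin characterization (\Cref{lem:leximinChar}) very closely, since Nanson's rule and leximin differ only on Graph F and the axioms differ only by weakening positive responsiveness to its tie-break version and adding the strong-Condorcet-extension requirement. First I would verify that Nanson's rule satisfies all four stated properties: it is neutral and pairwise by definition (using the observation from the proof of \Cref{thm:hasse} that for three candidates the ordinal margin graph determines all below-average Borda scores), it is a strong Condorcet extension (as asserted in the text preceding the lemma, and visible from \Cref{tbl:rules-ordinal}), and it satisfies tie-break positive responsiveness. For the last point, suppose $x,y\in f_{\text{Nanson}}(P)$ and we improve $x$ relative to $y$; I would argue that moving $x$ up past $y$ strictly increases $m_{x,y}$ (and hence the Borda score of $x$) while weakly decreasing the Borda score of $y$, so that in the resulting profile $x$ survives every elimination round that $y$ does but $y$ is eventually eliminated, leaving $f_{\text{Nanson}}(P')=\{x\}$. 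The analysis is finite because there are only three candidates.

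For uniqueness, let $f$ be any neutral, pairwise maximin refinement that is a strong Condorcet extension and satisfies tie-break positive responsiveness; by pairwiseness $f$ is also anonymous. Exactly as in the leximin proof, by pairwiseness it suffices to pin down $f$ on one profile inducing each ordinal margin graph, and the only graphs where maximin is not already resolute are Graphs A--F of \Cref{tbl:rules-ordinal}. Graphs A and B (the symmetric Condorcet cycle and the fully tied profile) force $f=\{a,b,c\}$ by anonymity and neutrality, agreeing with Nanson. Graphs C, D, and E can be handled verbatim as in \Cref{lem:leximinChar}: for C and D one starts from a profile $P'$ inducing Graph A or B where $f(P')=\{a,b,c\}$ and improves $a$ relative to $b$; here I would invoke tie-break positive responsiveness (with $a,b\in f(P')$ and the improvement being of $a$ relative to $b$) to conclude $f(P)=\{a\}$, which matches Nanson. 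For Graph E one uses anonymity, neutrality, and maximin-refinement to get $f(P)=\{a,c\}$, again matching Nanson.

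The crucial and genuinely different case is Graph F, the unique graph where Nanson ($\{a,c\}$) and leximin ($\{a\}$) disagree; this is where I expect the main obstacle. Here $m_{a,b}>m_{c,b}>m_{c,a}=0$, and I must show $f(P)=\{a,c\}$ rather than collapsing to $\{a\}$. The leximin proof applied full positive responsiveness to deduce $f(P)=\{a\}$, but that step is no longer available since only \emph{tie-break} positive responsiveness is assumed, and the relevant improvement strengthens $a$ relative to $b$ where $b\notin f(P')$, so the hypothesis of tie-break positive responsiveness does not trigger. Instead I would use the strong-Condorcet-extension axiom directly: in Graph F both $a$ and $c$ are intermediate Condorcet winners (each has all margins non-negative with at least one strict, since $m_{a,b}>0$, $m_{a,c}=0$, and $m_{c,b}>0$, $m_{c,a}=0$), so the set of intermediate Condorcet winners is exactly $\{a,c\}$ and is non-empty, forcing $f(P)=\{a,c\}$ immediately. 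The care needed is to confirm that $b$ is not an intermediate Condorcet winner (it loses to $a$) and that no other graph among C, D, E spuriously forces a strong-Condorcet outcome conflicting with the improvement arguments used there; checking this consistency across the finitely many graphs, and confirming that the strong-Condorcet axiom and tie-break positive responsiveness never contradict each other on the overlapping cases, is the only delicate bookkeeping in the proof.
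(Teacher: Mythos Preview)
Your proposal is correct and follows essentially the same approach as the paper: verify the axioms, then pin down $f$ on Graphs A--F using neutrality/anonymity for A and B, tie-break positive responsiveness (inherited verbatim from the leximin argument) for C and D, and the strong-Condorcet axiom for F. The only cosmetic differences are that the paper dispatches Graph~E via the strong-Condorcet axiom as well (since $a$ and $c$ are intermediate Condorcet winners there too) rather than via the symmetry argument you reuse from \Cref{lem:leximinChar}, and that the paper verifies tie-break positive responsiveness by an explicit case split on $|f_{\text{Nanson}}(P)|\in\{1,2,3\}$ rather than your elimination-order sketch.
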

\begin{proof}
	We first show that Nanson's rule satisfies these axioms.
	It refines maximin by \Cref{thm:hasse} and is pairwise and neutral as Borda scores only depend on the majority margins and are neutral. 
	That Nanson's rule is a strong Condorcet extension can be seen from \Cref{tbl:rules-ordinal}.
	Tie-break positive responsiveness is satisfied due to the following case distinction: Let $P$ be a profile. If $|f_{\text{Nanson}}(P)| = 1$, then tie-break positive responsiveness does not apply to $P$. If $|f_{\text{Nanson}}(P)| = 2$, say $f_{\text{Nanson}}(P) = \{a,c\}$, then $a$ and $c$ have positive Borda score in the first round and the resulting majority comparison is a tie. When we improve, say, $a$ relative to $c$, this increases the Borda score of $a$ and decreases the one of $c$ (and keeps the Borda score of $b$ the same). Thereafter, $a$ may be the unique candidate with positive Borda score in the first round, or $a$ and $c$ both retain positive Borda score, but then $a$ wins the majority comparison against $c$. Hence, in either case, $f_{\text{Nanson}}(P') = \{a\}$, as required by tie-break positive responsiveness. If $|f_{\text{Nanson}}(P)| = 3$, then the rule stopped without eliminating any candidates, which means that all candidates have a Borda score of zero in $P$. Hence, improving some $x$ relative to any $y$ makes $x$ the unique candidate with strictly positive Borda score, and thus $f_{\text{Nanson}}(P') = \{x\}$, as required by tie-break positive responsiveness.

	For uniqueness, let $f$ be neutral and pairwise maximin refinement that is a strong Condorcet extension and that satisfies tie-break positive responsiveness. We go through all possible margin graphs and show that $f$ coincides with Nanson's rule.
	Whenever maximin chooses only one candidate, it is clear that all refinements return the same choice set.
	Considering \Cref{tbl:rules-ordinal}, we see that only Graphs A to F remain.
	
	For Graphs A and B (Condorcet cycles with three equally and non-negatively weighted edges), pairwiseness and neutrality imply that $f(P) = \{a,b,c\} = f_{\text{Nanson}}(P)$ as in the proof of \Cref{lem:leximinChar}.
	The argument for Graphs C and D is identical to the corresponding argument in the proof of \Cref{lem:leximinChar} because tie-break positive responsiveness suffices. 
	For Graphs E and F, strong Condorcet-consistency requires that $f(P) =\{a,b\} = f_{\text{Nanson}}(P)$. This concludes the proof that $f$ is equal to Nanson's rule.
\end{proof}

As before, combined with \Cref{thm:maximin-refinement}, we obtain an axiomatic characterization.

\begin{corollary}
	Nanson's rule is the only homogeneous, neutral, and pairwise strong Condorcet extension that satisfies optimist participation and tie-break positive responsiveness.
\end{corollary}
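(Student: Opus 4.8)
The plan is to obtain this corollary by stitching together two results already established: \Cref{thm:maximin-refinement}, which forces any homogeneous Condorcet extension satisfying optimist participation to refine maximin, and the preceding lemma, which identifies Nanson's rule as the unique neutral and pairwise refinement of maximin that is a strong Condorcet extension and satisfies tie-break positive responsiveness. Both directions are essentially bookkeeping over hypotheses, so I expect no genuinely hard step; the only care needed is to confirm that the assumptions of the two cited results line up with the six axioms listed in the corollary (homogeneity, neutrality, pairwiseness, being a strong Condorcet extension, optimist participation, and tie-break positive responsiveness).

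For the membership direction, I would check that Nanson's rule enjoys all six properties. Neutrality, pairwiseness, being a strong Condorcet extension, and tie-break positive responsiveness are exactly what the preceding lemma already verified for Nanson's rule. Homogeneity holds because doubling a profile scales every majority margin by the same positive factor, leaving the sign of every Borda score (computed on any surviving subset of candidates) unchanged, so the elimination process of Nanson's rule runs identically and returns the same output. Finally, optimist participation for Nanson's rule was recorded earlier in this section, where Nanson's rule was listed among the three-candidate rules immune to the no-show paradox.

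For uniqueness, I would let $f$ be any homogeneous, neutral, and pairwise strong Condorcet extension satisfying optimist participation and tie-break positive responsiveness. The key observation is that a strong Condorcet extension is in particular a Condorcet extension: whenever a Condorcet winner $x$ exists it is the unique intermediate Condorcet winner, so $f(P) = \{x\}$. Hence $f$ is a homogeneous Condorcet extension satisfying optimist participation, and \Cref{thm:maximin-refinement} applies to yield that $f$ refines maximin. At this point $f$ meets every hypothesis of the preceding lemma---neutral, pairwise, a refinement of maximin, a strong Condorcet extension, and satisfying tie-break positive responsiveness---so that lemma forces $f = f_{\text{Nanson}}$.

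The main (and essentially the only) obstacle is matching the assumptions correctly: one must notice that homogeneity together with optimist participation is precisely the ingredient, via \Cref{thm:maximin-refinement}, that upgrades ``strong Condorcet extension'' to ``refinement of maximin,'' thereby supplying the one hypothesis of the lemma that is not already listed in the corollary. Once that bridge is in place, the conclusion is immediate.
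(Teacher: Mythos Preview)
Your proposal is correct and follows exactly the paper's approach: the corollary is obtained by combining \Cref{thm:maximin-refinement} (which, from homogeneity and optimist participation, forces a Condorcet extension to refine maximin) with the preceding lemma characterizing Nanson's rule among maximin refinements. The paper does not even spell out a proof beyond ``As before, combined with \Cref{thm:maximin-refinement}, we obtain an axiomatic characterization,'' so your write-up is, if anything, more detailed than what the paper provides.
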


\subsection{Characterization of Maximin}

Lastly, we characterize maximin itself. We do so using the continuity axiom, which has been used in other characterizations to rule out tie-breaking mechanisms \citep[e.g.,][]{Youn75a}.\footnote{To see why continuity rules out tie-breaking, suppose candidates $a$ and $b$ are tied, with the tie broken in favor of $a$. When copying the profile $n$ times, there will still be a tie, broken in favor of $a$ (assuming homogeneity). However, if we add a profile $P'$ where $b$ is a clear winner, then in $nP + P'$, candidate $b$ will be the strongest. This is a failure of continuity: $f(nP + P') = \{b\} \not\subseteq \{a\} = f(P)$.}
One can view leximin and Nanson as applying a certain tie-breaking mechanism to maximin, and indeed these two rules fail continuity.
In fact, \emph{all} of maximin’s strict refinements fail continuity.
This is true for any number of candidates and may thus be of independent interest.

\begin{lemma}
	\label{thm:continuous-refinements-of-maximin}
	Maximin is continuous.
	If $f$ is continuous and a refinement of maximin, then $f$ is equal to maximin.
\end{lemma}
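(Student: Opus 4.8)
The plan is to prove both statements from the single fact that majority margins are \emph{additive} under profile combination: $m_{x,y}(nP+P') = n\,m_{x,y}(P) + m_{x,y}(P')$. Writing $s(x,P) = \min_{y\neq x} m_{x,y}(P)$ for the maximin score, so that $f_{\text{maximin}}(P) = \arg\max_{x} s(x,P)$, I would first record an asymptotic formula. Let $Y_x$ be the set of opponents $y\neq x$ attaining $x$'s worst margin in $P$, and set $c_x = \min_{y\in Y_x} m_{x,y}(P')$. Since every $y \notin Y_x$ satisfies $m_{x,y}(P) \ge s(x,P)+1$, the additive formula shows that for all sufficiently large $n$ the minimum defining $s(x, nP+P')$ is attained on $Y_x$, giving $s(x, nP+P') = n\, s(x,P) + c_x$. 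This holds simultaneously for all $x$ once $n$ exceeds a threshold depending only on $P$ and $P'$.

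Continuity of maximin then follows at once: for large $n$ each maximin score equals $n\,s(x,P)+c_x$, whose dominant term is $n\,s(x,P)$, so any $x$ with $s(x,P) < \max_y s(y,P)$ is strictly beaten by any genuine maximin winner of $P$. Hence $f_{\text{maximin}}(nP+P') \subseteq f_{\text{maximin}}(P)$, which is exactly continuity.

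For the second statement, refinement already gives $f(P)\subseteq f_{\text{maximin}}(P)$, so it suffices to show every maximin winner $x$ of $P$ lies in $f(P)$. The idea is to invoke continuity with base profile $P$ and a perturbation $P'$ that breaks all maximin ties strictly in favour of $x$. Using McGarvey's theorem, I would choose $P'$ whose margin graph has $m_{x,y}(P') = K$ for every $y\neq x$ (with $K$ large and even) and $m_{y,z}(P') = 0$ for all $y,z \neq x$. Then $c_x = K$, while for every other maximin winner $y$ we have $c_y \le 0$: if $x\in Y_y$ then $c_y \le m_{y,x}(P') = -K$, and otherwise all relevant edges of $y$ in $P'$ have weight $0$, so $c_y = 0$. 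Combining with the asymptotic formula, for all large $n$ the candidate $x$ uniquely maximises $s(\cdot, nP+P')$: non-winners of $P$ are excluded by the dominant term $n\,s(\cdot,P)$, and among the maximin winners of $P$ the correction $c_\cdot$ is maximised uniquely by $x$. Thus $f_{\text{maximin}}(nP+P') = \{x\}$ for large $n$; since $f$ refines maximin and is non-empty, $f(nP+P') = \{x\}$ for such $n$, and continuity gives $\{x\} = f(nP+P') \subseteq f(P)$, i.e.\ $x\in f(P)$. As $x$ was arbitrary, $f_{\text{maximin}}(P)\subseteq f(P)$, whence $f = f_{\text{maximin}}$.

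The step to get right is the perturbation: one must confirm that boosting $x$ in $P'$ \emph{strictly} dominates all other maximin winners in the combined profile. The tempting but wrong target is to force $c_y < 0$ for every other winner $y$; this can be impossible when $s^\ast = 0$ and two winners are mutual worst opponents, since the single shared edge cannot be oriented against both. The resolution — and the one subtlety worth stating carefully — is that one needs only $c_x > c_y$, not $c_y < 0$: because $x\notin Y_x$ forces $c_x = K > 0$ while every $c_y \le 0$, strict domination holds in all cases, including $s^\ast = 0$. I would also note that the disjoint electorates required by continuity exist since $N^\ast = \mathbb N$ is infinite, and that taking $K$ even keeps all margins of $P'$ of one parity so that McGarvey's theorem applies.
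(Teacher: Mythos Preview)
Your proof is correct. Both your argument and the paper's rest on the same asymptotic identity $s(x,nP+P') = n\,s(x,P) + c_x$ for large $n$ (the paper packages this as a claim about ``opponents''), and both derive continuity of maximin from it in the obvious way.

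The genuine difference is in the perturbation used for the second statement. The paper splits into two cases: when every maximin winner has non-negative margin against everyone (so all winners are pairwise majority-tied), it adds a single voter ranking $x_1$ first to make $x_1$ the Condorcet winner; otherwise, when the common maximin score is negative, it uses McGarvey to \emph{decrease} each rival winner's margin against its own opponent by $2$, which requires a well-definedness check to ensure no edge is assigned two different weights. Your construction---boosting $x$'s margin against \emph{everyone} by a large even $K$ and leaving all other edges at $0$---handles both cases uniformly and avoids the well-definedness issue entirely, since only edges incident to $x$ are perturbed. Your observation that one needs only $c_x > c_y$, not $c_y < 0$, is exactly what makes the single construction work where the paper's Case~B construction would fail at $s^\ast = 0$. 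In short: same skeleton, but your perturbation is cleaner and sidesteps the case analysis.
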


\begin{proof}
	In this proof, a candidate $x^* \neq x$ is said to be an \emph{opponent} of $x$ w.r.t.\ profile $P$ if $m_{x,x^*}(P) \le m_{x,z}(P)$ for all $z \neq x$. With this terminology, the definition of the maximin rule becomes $f_{\text{maximin}}(P) = \{ x \in A : m_{x, x^*}(P) \ge m_{y, y^*}(P) \text{ for all $y \in A$}\}$.
	
	We first claim the following: 
	\begin{equation}
		\label{eq:continuous-opponent-claim}
		\parbox{.85\textwidth}{for any profiles $P$ and $P'$ and any candidate $x$, if $x^*$ is an opponent of $x$ w.r.t.\ $P$ and among these minimizes $m_{x,x^*}(P')$, then for all large enough $n$, $x^*$ is also an opponent of $x$ w.r.t.\ $nP + P'$.}
	\end{equation}
	To prove this, we have to show that $m_{x,x^*}(nP + P') \le m_{x,z}(nP + P')$ for all $z \neq x$. First, let $z$ not be an opponent of $x$. Then, $m_{x,x^*}(P) < m_{x,z}(P)$, and hence for large enough $n$, we have
	\[
		m_{x,x^*}(nP + P') = n m_{x,x^*}(P) + m_{x,x^*}(P') < n m_{x,z}(P) + m_{x,z}(P') = m_{x,z}(nP + P').
	\]
	Otherwise, $z$ is an opponent of $x$ and hence $m_{x,x^*}(P) = m_{x,z}(P)$. By choice of $x^*$, we obtain 
	\[
		m_{x,x^*}(nP + P') = n m_{x,x^*}(P) + m_{x,x^*}(P') \le n m_{x,z}(P) + m_{x,z}(P') = m_{x,z}(nP + P'), 
	\]
	proving the claim.
	
	To show that maximin satisfies continuity, let $P$ and $P'$ be two profiles and write $W = f_{\text{maximin}}(P)$. For each $z \in A$, denote by $z^*$ an opponent of $z$ w.r.t.\ $P$ which among them minimizes $m_{z,z^*}(P')$. If $W = A$, then trivially $f_{\text{maximin}}(nP + P') \subseteq W$ and we are done. Otherwise, let $x \in W$ and $y \notin W$, and so $m_{x,x^*}(P) > m_{y, y^*}(P)$. By choosing $n$ large enough, we also obtain that $m_{x,x^*}(nP + P') > m_{y, y^*}(nP + P')$. By \eqref{eq:continuous-opponent-claim}, $y$ is not chosen by maximin in $nP + P'$. 
	Hence, for $n$ large enough, we have $f_{\text{maximin}}(nP + P') \subseteq W$.
	
	For the second claim, let $f$ be a refinement of maximin that is continuous. Let $P$ be a profile and write $W = f_{\text{maximin}}(P)$; we need to show that $f(P) = W$.
	If $|W| = 1$, this is true since $f$ refines maximin. Otherwise, write $W = \{x_1, \dots, x_k\}$ with $k \ge 2$. Without loss of generality, it suffices to show that $x_1 \in f(P)$. 

	Suppose first that $m_{x_i, z}(P) \ge 0$ for all $x_i \in W$ and all $z \in A$. Then for all $x_i, x_j \in W$, we have $m_{x_i, x_j}(P) \ge 0$ and $m_{x_j, x_i}(P) \ge 0$. Since $m_{x_i, x_j}(P) = -m_{x_j, x_i}(P)$, it follows that $m_{x_i, x_j}(P) = 0$.
	Now let $P'$ be a profile consisting of a single voter who ranks $x_1$ on top. For every $n \ge 1$, the profile $n P + P'$ has $x_1$ as its Condorcet winner because it now strictly beats all other candidates. Because $f$ is a Condorcet extension (being a maximin refinement) and because it is continuous, we have $\{x_1\} = f(nP + P') \subseteq f(P)$ and therefore $x_1 \in f(P)$, as desired.
	
	Otherwise, $m_{x_k, y}(P) < 0$ for some $x_k \in W$ and some $y \neq x_k$. For each $i$, let $x_i^*$ be an opponent of $x_i$ w.r.t.\ $P$. Becaue $m_{x_k, y}(P) < 0$, we have $m_{x_k, x_k^*}(P) < 0$. Thus by definition of maximin, we have $m_{x_i, x_i^*}(P) < 0$ for all $x_i \in W$. Using McGarvey's theorem, we now construct a profile $P'$ with the following margin graph: for each $x_i \in W \setminus \{x_1\}$, set $m_{x_i, x_i^*}(P') = -2$ and $m_{x_i^*, x_i}(P') = 2$; all other majority margins are set to zero. To see that this is well-defined (and we do not assign different weights to the same edge), note that if we had $(x_i, x_i^*) = (x_j^*, x_j)$ for some $i$ and $j$, then $m_{x_i, x_i^*}(P) = m_{x_j^*, x_j}(P) = -m_{x_j, x_j^*} > 0$, contradicting $m_{x_i, x_i^*}(P) < 0$.
	Now, for each $x_i \in W$, the margin $m_{x_i, x_i^*}(P') = -2$ is the smallest margin appearing in $P'$ and thus $x_i^*$ is an opponent $x_i$ w.r.t.\ $P$ which minimizes $m_{x_i, x_i^*}(P')$.
	Thus for large enough $n$, due to \eqref{eq:continuous-opponent-claim}, we find that for every $x_j \in W$,  the candidate $x_j^*$ is an opponent of $x$ w.r.t.\ $nP + P'$.
	Note that because maximin satisfies continuity, we have $f_{\text{maximin}}(nP + P') \subseteq f_{\text{maximin}}(P) = W$ for large enough $n$.
	In addition, for all $x_j \neq x_1$, we have 
	\[
	m_{x_1, x_1^*}(nP + P') = nm_{x_1, x_1^*}(P) + 0 = nm_{x_j, x_j^*}(P) >  nm_{x_j, x_j^*}(P) - 2 = m_{x_j, x_j^*}(nP + P'),
	\]
	and hence $x_j \notin f_{\text{maximin}}(nP + P')$. Thus $f_{\text{maximin}}(nP + P') = \{x_1\}$.
	Because $f$ is a maximin refinement and continuous, we obtain $\{x_1\} = f(nP + P') \subseteq f(P)$ for $n$ large enough, and therefore $x_1 \in f(P)$, as desired.
\end{proof}

Again, combined with \Cref{thm:maximin-refinement}, we obtain an axiomatic characterization.

\begin{corollary}
	\label{cor:maximin-characterization}
	Maximin is the only homogeneous and continuous Condorcet extension that satisfies optimist participation.
\end{corollary}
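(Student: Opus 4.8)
The plan is to obtain this characterization directly by combining the two main results already established in this section, namely \Cref{thm:maximin-refinement} and \Cref{thm:continuous-refinements-of-maximin}, so that essentially no new argument is needed. As with every ``$X$ is the only $\dots$'' statement, I would split the proof into an existence part (maximin itself satisfies the listed axioms) and a uniqueness part (any social choice function satisfying them must equal maximin). Note that both homogeneity and continuity presuppose $N^* = \mathbb{N}$, which is the standing assumption of this section, so all four axioms are well-defined.

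For the existence part, I would verify each axiom in turn. Maximin is a Condorcet extension because a Condorcet winner $x$ satisfies $\min_{y \neq x} m_{x,y} > 0$, while every other candidate $z$ loses to $x$ and hence has $\min_{y \neq z} m_{z,y} \le m_{z,x} < 0$, making $x$ the unique maximin winner. Homogeneity is immediate from the observation that a $t$-fold copy multiplies every margin by $t$, i.e.\ $m_{x,y}(tP) = t\,m_{x,y}(P)$, so that $\min_{y \neq x} m_{x,y}$ is scaled by the positive constant $t$ for every $x$ and the $\arg\max$ is unchanged. Continuity of maximin is exactly the first claim of \Cref{thm:continuous-refinements-of-maximin}. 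Finally, that maximin satisfies optimist participation for three candidates was already shown in the discussion preceding \Cref{thm:maximin-refinement}, by breaking ties according to the deviating voter's own preference order and invoking the resolute-participation result for maximin.

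For the uniqueness part, I would let $f$ be an arbitrary homogeneous and continuous Condorcet extension satisfying optimist participation. Applying \Cref{thm:maximin-refinement}, whose hypotheses are exactly homogeneity, Condorcet-consistency, and optimist participation, shows that $f$ is a refinement of maximin. Then applying the second claim of \Cref{thm:continuous-refinements-of-maximin} to this refinement, using that $f$ is continuous, yields $f = f_{\text{maximin}}$, as required.

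Since both of the hard lemmas are available, there is no genuine obstacle left; the only points requiring care are purely verificational. These are confirming that maximin does satisfy all four axioms, so that the existence direction is not vacuous, and checking that the hypotheses of \Cref{thm:maximin-refinement} and \Cref{thm:continuous-refinements-of-maximin} line up exactly with the axioms in the statement. Both checks are routine in the present three-candidate, infinite-electorate setting.
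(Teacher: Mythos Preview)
Your proposal is correct and follows exactly the route the paper takes: the corollary is obtained by combining \Cref{thm:maximin-refinement} (to show that any such $f$ refines maximin) with \Cref{thm:continuous-refinements-of-maximin} (to conclude $f = f_{\text{maximin}}$), together with the prior observations that maximin itself is a homogeneous Condorcet extension satisfying optimist participation and continuity. Your explicit split into existence and uniqueness, and the routine verifications you list, are precisely what the paper leaves implicit.
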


The axioms in \Cref{cor:maximin-characterization} are independent: leximin satisfies all axioms except continuity, Borda's rule satisfies all axioms except Condorcet-consistency, and the top cycle satisfies all axioms except optimist participation. 
Without homogeneity, we can construct an artificial rule that is identical to maximin, except for the profile \href{https://voting.ml/?profile=3ABC-3BCA-2CAB-1ACB}{$3abc+3bca+2cab+1acb$}, where the rule returns $\{a,c\}$ instead of $\{a\}$.
Clearly, homogeneity is violated (doubling this profile leads to outcome $\{a\}$). All other properties are inherited from maximin. Checking that optimist participation is satisfied can be done by case analysis.

\section{Conclusion}

We have investigated whether the search for a desirable Condorcet extension becomes easier when focussing on the special case of three candidates. Our results highlight the maximin rule and two of its refinements (Nanson's rule and leximin) as being particularly robust to common criticisms of Condorcet extensions. Indeed, we showed that they are axiomatically characterized by their immunity to the no-show paradox, together with other desirable properties such as positive responsiveness. These conclusions could motivate advocating for their adoption in real-world elections.

\citet{Nans82a} gave a rather simple description of his rule for three candidates: each voter assigns 2 points to his most preferred candidate and 1 point to his second most preferred candidate; all candidates whose score exceeds the total number of voters face off in a runoff election. Leximin is a simple refinement of Nanson where a tie in the runoff is broken using the scores from the first round. In the generic case (which applies when the number of voters is large), maximin, Nanson, and leximin all coincide.

\begin{figure}[t]
	\centering
	\begin{tikzpicture}

\definecolor{crimson2143940}{RGB}{214,39,40}
\definecolor{darkgray176}{RGB}{176,176,176}
\definecolor{darkorange25512714}{RGB}{255,127,14}
\definecolor{forestgreen4416044}{RGB}{44,160,44}
\definecolor{lightgray204}{RGB}{204,204,204}
\definecolor{steelblue31119180}{RGB}{31,119,180}

\begin{axis}[
legend cell align={left},
legend style={fill opacity=0.8, draw opacity=1, text opacity=1, draw=lightgray204},
tick align=outside,
tick pos=left,
x grid style={darkgray176},
xlabel={Number of voters (only even numbers considered)},
xmin=0, xmax=150.2,
xtick style={color=black},
xtick distance=20,
y grid style={darkgray176},
ymin=-0.01, ymax=0.35,
ytick style={color=black},
ytick={-0.05,0,0.05,0.1,0.15,0.2,0.25,0.3,0.35},
yticklabels={-5\%,0\%,5\%,10\%,15\%,20\%,25\%,30\%,35\%},
height=6cm,
width=13cm
]
\addplot [semithick, steelblue31119180]
table {%
4 0.333333
6 0.264069
8 0.219114
10 0.18681
12 0.16257
14 0.14396
16 0.12914
18 0.11706
20 0.10705
22 0.09862
24 0.09141
26 0.08519
28 0.07976
30 0.07497
32 0.07073
34 0.06695
36 0.06354
38 0.06047
40 0.05768
42 0.05513
44 0.05281
46 0.05067
48 0.04869
50 0.04687
52 0.04517
54 0.04360
56 0.04213
58 0.04075
60 0.03947
62 0.03826
64 0.03712
66 0.03605
68 0.03504
70 0.03408
72 0.03318
74 0.03232
76 0.03151
78 0.03073
80 0.02999
82 0.02929
84 0.02862
86 0.02798
88 0.02737
90 0.02678
92 0.02622
94 0.02568
96 0.02516
98 0.02467
100 0.0241
102 0.0237
104 0.0232
106 0.0228
108 0.0224
110 0.0220
112 0.0216
114 0.0213
116 0.0209
118 0.0206
120 0.0202
122 0.0199
124 0.0196
126 0.0193
128 0.0190
130 0.0187
132 0.0184
134 0.0182
136 0.0179
138 0.0176
140 0.0174
142 0.0172
144 0.0169
146 0.0167
148 0.0165
150 0.0163
};
\addlegendentry{maximin}
\addplot [semithick, darkorange25512714]
table {%
4 0.142857
6 0.134199
8 0.121212
10 0.10889
12 0.09857
14 0.08978
16 0.08226
18 0.07587
20 0.07035
22 0.06555
24 0.06135
26 0.05764
28 0.05435
30 0.05141
32 0.04876
34 0.04637
36 0.04421
38 0.04223
40 0.04042
42 0.03876
44 0.03723
46 0.03582
48 0.03450
50 0.03328
52 0.03215
54 0.03108
56 0.03009
58 0.02916
60 0.02828
62 0.02745
64 0.02667
66 0.02594
68 0.02524
70 0.02458
72 0.02395
74 0.02336
76 0.02279
78 0.02225
80 0.02174
82 0.02124
84 0.02077
86 0.02032
88 0.01989
90 0.01948
92 0.01908
94 0.01870
96 0.01834
98 0.01799
100 0.0176
102 0.0173
104 0.0170
106 0.0167
108 0.0164
110 0.0161
112 0.0158
114 0.0155
116 0.0153
118 0.0150
120 0.0148
122 0.0146
124 0.0144
126 0.0141
128 0.0139
130 0.0137
132 0.0135
134 0.0133
136 0.0131
138 0.0130
140 0.0128
142 0.0126
144 0.0124
146 0.0123
148 0.0121
150 0.0120
};
\addlegendentry{Nanson}
\addplot [semithick, forestgreen4416044]
table {%
4 0.09523
6 0.06926
8 0.05128
10 0.0389
12 0.0307
14 0.0247
16 0.0203
18 0.0170
20 0.0144
22 0.0124
24 0.0107
26 0.0094
28 0.0083
30 0.0074
32 0.0066
34 0.0059
36 0.0054
38 0.0049
40 0.0045
42 0.0041
44 0.0038
46 0.0035
48 0.0032
50 0.0030
52 0.0028
54 0.0026
56 0.0024
58 0.0023
60 0.0021
62 0.0020
64 0.0019
66 0.0018
68 0.0017
70 0.0016
72 0.0015
74 0.0014
76 0.0013
78 0.0013
80 0.0012
82 0.0012
84 0.0011
86 0.0011
88 0.0010
90 0.0010
92 0.0009
94 0.0009
96 0.0008
98 0.0008
100 0.000
102 0.000
104 0.000
106 0.000
108 0.000
110 0.000
112 0.000
114 0.000
116 0.000
118 0.000
120 0.000
122 0.000
124 0.000
126 0.000
128 0.000
130 0.000
132 0.000
134 0.000
136 0.000
138 0.000
140 0.000
142 0.000
144 0.000
146 0.000
148 0.000
150 0.000
};
\addlegendentry{leximin}
\addplot [semithick, crimson2143940]
table {%
4 0.14285
6 0.10389
8 0.07692
10 0.05794
12 0.04686
14 0.03869
16 0.03228
18 0.02775
20 0.02416
22 0.02118
24 0.01887
26 0.01695
28 0.01529
30 0.01394
32 0.01278
34 0.01175
36 0.01088
38 0.01012
40 0.00943
42 0.00883
44 0.00830
46 0.00782
48 0.00739
50 0.00700
52 0.00664
54 0.00631
56 0.00602
58 0.00574
60 0.00549
62 0.00526
64 0.00505
66 0.00485
68 0.00467
70 0.00449
72 0.00433
74 0.00418
76 0.00404
78 0.00391
80 0.00378
82 0.00367
84 0.00356
86 0.00345
88 0.00335
90 0.00326
92 0.00317
94 0.00308
96 0.00300
98 0.00293
100 0.002859
102 0.002789
104 0.002723
106 0.002659
108 0.002599
110 0.002541
112 0.002485
114 0.002431
116 0.002380
118 0.002331
120 0.002283
122 0.002238
124 0.002194
126 0.002151
128 0.002111
130 0.002071
132 0.002034
134 0.001997
136 0.001962
138 0.001928
140 0.001894
142 0.001862
144 0.001832
146 0.001802
148 0.001773
150 0.001744
};
\addlegendentry{Black}
\end{axis}

\end{tikzpicture}
	\caption{Fraction of anonymous profiles in which SCFs return more than one winner, computed using Ehrhart theory \citep[see, e.g.,][]{BGS15a} and the Normaliz package \citep{BIR+15a}.}
	\label{fig:irresolute-ehrhart}
\end{figure}
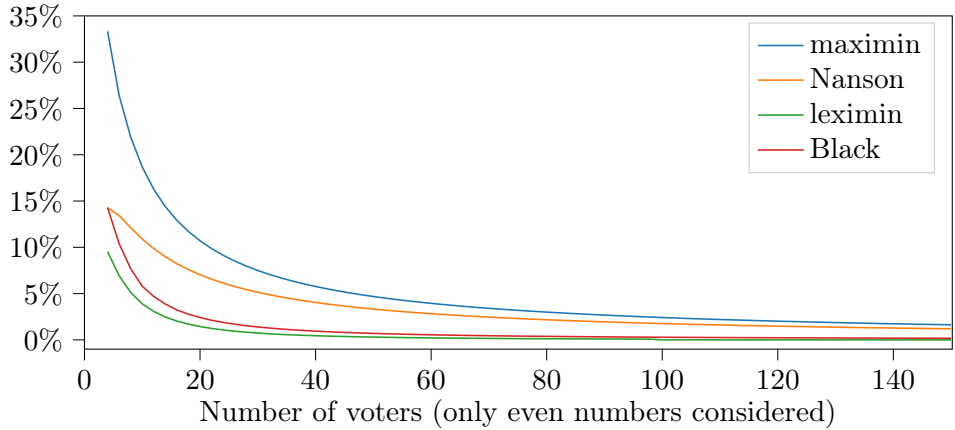

Studies on the frequency of voting paradoxes complement our results by showing that maximin (and its refinements) not only do well for small but also for large numbers of voters when there are three candidates. %
\citet{CMM14a} analyze the frequency of the reinforcement paradox of various Condorcet extensions using Monte Carlo simulations and find that ``although all frequencies are small, they are smaller for [maximin].''
\citet{Heil20a} proves that when the number of voters goes to infinity, maximin only suffers from the reinforcement paradox for 0.37\% of all pairs of anonymous profiles in which the winners coincide, which is lower than the corresponding numbers for Black's rule and plurality with runoff.
\citet{PlTi14a} analyze the frequency of voting paradoxes based on data generated using a spatial model, which they argue most accurately describes real-world preference profiles for three candidates. They conclude that ``the Black rule and the Nanson rule encounter most paradoxes and ties less frequently than the other rules do, especially in elections with few voters.'' As we have discussed, the leximin rule produces ties even less frequently than Nanson's rule. We can quantify this effect by computing the fraction of anonymous profiles on which different rules are non-resolute, as a function of the number of voters. The results are shown in \Cref{fig:irresolute-ehrhart} and show that leximin even outperforms Black's rule.%
\footnote{This can be explained by observing that for every social choice function $f$ that is pairwise, neutral, and never selects Pareto-dominated candidates, it holds that if leximin is irresolute on a profile $P$, then $f$ is also irresolute on $P$ (see Graphs A, B, and F in \Cref{tbl:rules-ordinal}). So leximin is resolute whenever possible.}

In conclusion, we believe that maximin and its refinements are very attractive for three-candidate elections and are compelling choices for adoption in real-world applications.\footnote{Early attempts to implement Nanson's rule in real-world elections were met with limited success \citep[see][for a historical account]{McLe96a}.}
Similar conclusions were drawn by \citet{Nurm89a}, \citet{FeNu18a}, and \citet{LeSm19a}.
We emphasize that our arguments do not extend beyond three-candidate elections. When there are four or more candidates, the no-show paradox cannot be avoided. In addition, there are many different ways of extending the three-candidate maximin rule to handle additional candidates, and not all of them are equally desirable. For example, for four or more candidates, maximin may return candidates that are last-ranked by a majority of voters \citep[see, e.g.,][]{Fels10a,BMS20b}, while other generalizations like split cycle avoid this problem.

\subsection*{Acknowledgments}
	This material is based on work supported by the Deutsche Forschungsgemeinschaft under grants {BR~2312/11-2} and {BR~2312/12-1} and by the Agence Nationale de la Recherche under grant ANR-22-CE26-0019 (CITIZENS). The authors are grateful to former students Florian Grundbacher, Keyvan Kardel, and Christian Stricker for constructing some of the examples used in this paper.

\end{document}